\newtheorem{theorem}{Theorem}
\begin{document}
\title{Quantum error correction with higher Gottesman-Kitaev-Preskill codes:\\minimal measurements and linear optics}
\date{\today}
\author{Frank Schmidt}
\email{fschmi@students.uni-mainz.de}
\affiliation{Institute of Physics, Johannes Gutenberg-Universit\"at Mainz, Staudingerweg 7, 55128 Mainz, Germany}
\author{Peter van Loock}
\email{loock@uni-mainz.de}
\affiliation{Institute of Physics, Johannes Gutenberg-Universit\"at Mainz, Staudingerweg 7, 55128 Mainz, Germany}
\begin{abstract}
 We propose two schemes to obtain Gottesman-Kitaev-Preskill (GKP) error syndromes by means of linear optical operations, homodyne measurements and GKP ancillae.
 This includes showing that for a concatenation of GKP codes with a $[n,k,d]$ stabilizer code only $2n$ measurements are needed in order to obtain the complete syndrome information,  significantly reducing the number of measurements in comparison to the canonical concatenated measurement scheme and at the same time generalizing linear-optics-based syndrome detections to higher GKP codes.
 Furthermore, we analyze the possibility of building the required ancilla states from single-mode states and linear optics.
 We find that for simple GKP codes this is possible, whereas for concatenations with qubit Calderbank-Shor-Steane (CSS) codes of distance $d\geq3$ it is not. 
 We also consider the canonical concatenated syndrome measurements and propose methods for avoiding crosstalk between ancillae. 
 In addition, we make use of the observation that the concatenation of a GKP code with a stabilizer code forms a lattice in order to see the analog information decoding of such codes from a different perspective allowing for semi-analytic calculations of the logical error rates.
\end{abstract}
\maketitle

\section{Introduction}
In the last few years large interest arose in bosonic quantum error correcting schemes, which encode a finite dimensional system within a harmonic oscillator, such as cat and GKP codes \cite{catcode,gkp}.
This growing interest for such codes came from experiments demonstrating first implementations of these codes \cite{cat_experiment,Fluehmann2019,Campagne-Ibarcq2020} and partly already outperforming simple encodings, although the codes were proposed already two decades ago.
As these codes even allow for error correction with a single oscillator mode they are very hardware-efficient.
However, the GKP codes are only able to correct small displacement errors and therefore concatenations with stabilizer codes \cite{kosuke_minimal_measurement,PhysRevA.101.012316,PhysRevA.99.032344,PhysRevX.8.021054,PhysRevA.102.052408} are often considered in order to correct larger shifts. 
The analog syndrome information of individual GKP codes has gained a lot of attention as it helps to further boost the error-correction capability of the code concatenation, because even for a code of distance $d=3$ it allows for correcting some two-qubit errors.

GKP codes are now also considered for quantum communication, since they can be encoded in an electromagnetic light field, which is the ideal long-distance quantum information carrier, and so have been shown to almost achieve the capacity of the loss channel in the low loss regime \cite{gkp_capacity}.
Furthermore, for quantum communication one only needs Clifford gates and Pauli-measurements which can be implemented in the GKP encoding with Gaussian optics and homodyne measurements.
Recently concatenations with qubit stabilizer codes have been considered for communication \cite{kosuke_gkp_repeater} making also use of the analog information in the GKP error syndrome \cite{analog_gkp_repeater}.

In this paper we primarily describe the GKP codes by making use of their stabilizer formulation, because this allows us to simply generalize results from the usual square lattice GKP code to more general lattices and it is also useful for the concatenation with high-level codes, which we assume to be  qubit (qudit) stabilizer codes.
 We show that it is not necessary to first perform the GKP syndrome measurements and later those of the stabilizer code independently as it is usually done in the literature. Instead it is possible to find a joint minimal set of stabilizer generators for the concatenation of both codes which can then be measured  reducing the overhead of necessary ancilla states.
 Related to this result, we propose two explicit methods for obtaining this syndrome information without inline squeezing operations and based on passive linear optics.
 In particular, our linear-optics schemes for the error correction syndrome detections include those of the higher GKP codes, thus extending existing linear-optics schemes for sole GKP qubit syndrome detection.
 We show that the error correcting properties of a code remain invariant under (passive) linear optical transformations for isotropic displacement noise.
 Additionally, we also discuss the possibility of generating the ancilla states necessary for error correction with linear optics and show that it is impossible to generate codewords of such a high-level GKP qubit code with code distance $d\geq3$ by employing rectangular single-mode grid states and linear optics.
 These results are not in contradiction and complementary to the results from Ref. \cite{xanadu_linearoptics} who consider additional GKP states which are then measured, while we do not assume such additional GKP states. We also discuss some other results concerning the possibility of building GKP-type states with passive linear optics, namely for GKP Bell states composed of two general (multi-mode) GKP codes or codewords assuming that two copies of suitable codes or codewords are already experimentally accessible.   
  
Moreover, we also discuss the possibility of performing syndrome measurements of the higher level code following the canonical measurement approach in such a way that there is no error propagation from one ancilla to another one.
Finally, we demonstrate how one can systematically calculate the performance of the concatenation of GKP qudits with a high level code when making use of the analog syndrome information in a semi-analytic way. 

The paper is structured as follows. In Sec. \ref{sec:two} we review qudits and GKP codes, in Sec. \ref{sec:three} we give a brief review about different schemes for obtaining the GKP syndrome information. In Sec. \ref{sec:four} we discuss the minimal number of measurements for higher GKP codes and propose a linear-optical realization based on error correction by teleportation. In Sec. \ref{sec:knill-glancy} we propose another linear-optical realization of the minimal set of measurements and in Sec. \ref{sec:six} we discuss methods for avoiding error propagation between ancillas when performing stabilizer measurements. Finally, we compare the different methods of obtaining the syndrome information in Sec. \ref{sec:seven} and conclude in Sec. \ref{sec:conclusion}.
\section{Background}\label{sec:two}
\subsection{Qudits}
We refer to a quantum system represented by a finite dimensional Hilbert space of dimension $D$ as a qudit of dimension $D$. Furthermore, we label states in the $Z$-basis by elements of $\mathbb{Z}_D$. For these qudits we can generalize the Pauli operators as
\begin{align}
	X_D=\sum_{j=0}^{D-1}\ket{j+1 \mod D}\bra{j}\,,\\Z_D=\sum_{j=0}^{D-1} \exp(i \frac{2\pi}{D}j)\ket{j}\bra{j}\,,\\
	Z_DX_D=\exp(i\frac{2\pi}{D})X_DZ_D\,.\label{eq:pauli_commutation}
\end{align}
For qudits we can then give  $D^2$ basis elements for all operators, taking the form $P^{rs}:=X_D^r Z_D^s$ with $r,s\in \mathbb{Z}_D$. For brevity we drop the index $D$ in Pauli operators.
When neglecting global phase information, it is possible to map Pauli operators acting on $n$ qudits onto
 $\mathbb{Z}_D^{2n}$ via 
\begin{align}
	\phi(X_1^{r_1}Z_1^{s_1}\dots X_n^{r_n}Z_n^{s_n})=\left(r_1,\dots,r_n|s_1,\dots,s_n\right)\,.
\end{align}
Using equation \ref{eq:pauli_commutation} we see that
\begin{align}
P^{rs}P^{r's'}=\exp\left(-i\frac{2\pi}{D}\omega((r,s),(r',s'))\right) P^{s',r'}P^{s,r}\,,
\end{align} 
where $\omega(\cdot,\cdot)$ is the canonical symplectic form given by $\omega((r,s),(r',s')=r\cdot s'-s\cdot r'$. Thus, two Pauli operators commute if and only if the symplectic form of the two symplectic representations of the Pauli operators vanishes modulo $D$.

Stabilizer codes (see \cite{nielsen_chuang_2010,qudit_stabilizer} for more details) are defined by an abelian subgroup $S$ of the Pauli group which acts as the identity within the code space. Given such a group it is possible to find a small set generating the whole group. For the special case of prime $D$ there is the nice relation that the number of stabilizer generators is equal to $n-k$, where $n$ is the number of physical qudits and $k$ is the number of encoded qudits. However, for non-prime $D$ we can have up to $2n$ stabilizer generators \cite{qudit_stabilizer}. The code distance of a stabilizer code is given by the lowest weight element in $\mathcal{N}(S)/S$, where $\mathcal{N}(S)$ denotes the normalizer of $S$.
It is quite convenient to give a stabilizer code by a $l \times 2n$ matrix given by the symplectic representation of the $l$ stabilizer generators. For CSS-codes this matrix can be brought to the following form

\begin{equation}
H=\begin{pmatrix}
H_X & 0 \\ 
0 & H_Z
\end{pmatrix} \,.
\end{equation}

Thus, bit- and phase-flips can be corrected independently.

\subsection{GKP codes}

GKP codes \cite{gkp} encode $n$ qudits within the phase space of a harmonic oscillator with $n$ modes. These codes can be understood as stabilizer codes, where the code space is stabilized by by a discrete, abelian subgroup of the continuous Weyl-Heisenberg group. \footnote{We denote this continuous Weyl-Heisenberg group simply as Weyl-Heisenberg group, while we refer to the discrete Weyl-Heisenberg group as Pauli-group.}

The elements of the continuous Weyl-Heisenberg group for $n$ modes can be given as $U(\theta,\alpha,\beta)=\exp(i \theta)\exp(i\sqrt{2\pi}\sum_{j=1}^{n}(\alpha_j \hat{q}_j+\beta_j \hat{p}_j))$ with real numbers $\alpha,\beta \in \mathbb{R}^n$ and $\hat{q}$ and $\hat{p}$ denote the position and momentum operators fulfilling $[\hat{q},\hat{p}]=i \hbar$; in this article we set $\hbar=1$. Thus this group is isomorphic to $U(1)\times \mathbb{R}^{2n}$. The commutation relation of two group elements is given by
\begin{align}
U(\theta_1,\alpha_1,\beta_1)U(\theta_2,\alpha_2,\beta_2)=U(\theta_2,\alpha_2,\beta_2)U(\theta_1,\alpha_1,\beta_1)\nonumber\\ \times \exp(-i 2\pi \omega((\alpha_1,\beta_1),(\alpha_2,\beta_2)))\,,
\end{align}
where $\omega(\cdot,\cdot)$ is the canonical symplectic product already introduced in the previous qudit section extended to real numbers and can be obtained by the Baker-Campbell-Hausdorff formula. In order to obtain commuting operators, we need to find elements in $\mathbb{R}^{2n}$ whose symplectic product gives pairwise an integer. We will refer to the parameterization via $\mathbb{R}^{2n}$ as phase space or symplectic representation.
In order to encode a finite-dimensional system in the $2n$ dimensional code space, we need $2n$ independent stabilizer generators which we use as a definition for the stabilizer group.
If we have found those elements in $\mathbb{R}^{2n}$, then we know that also all elements in the lattice $\mathcal{L}$ generated by the $2n$ independent vectors in $\mathbb{R}^{2n}$ also correspond to commuting operators due to the linearity of the symplectic product \footnote{Every point in the lattice corresponds to a member of the stabilizer group. However, we defined the stabilizer group by products of stabilizer generators and when converting the product into a single Weyl-Heisenberg operator by using the Baker-Campbell-Hausdorff formula a phase of $\pm1$ can appear. This sign ambiguity, however, does not affect our results. }. The set of operators commuting with all stabilizers corresponds to the dual lattice $\mathcal{L}^\bot$ (with respect to the symplectic form). Thus $\mathcal{L}^\bot/\mathcal{L}$ give logical operators and therefore we can define the code distance (with respect to the euclidean norm), analogously to qudit stabilizer codes, as the minimum weight of non-trivial elements in $\mathcal{L}^\bot/\mathcal{L}$ giving the smallest error commuting with all stabilizers.

As an example let us consider the well known square lattice code. The stabilizer generators are given by
\begin{equation}
	\exp(-i\sqrt{2\pi D}\hat{p})\,,\quad \exp(i \sqrt{2\pi D}\hat{q})\,,
\end{equation}
with logical operators
\begin{equation}
	\overline{X}=\exp(-i\sqrt{\frac{2\pi}{D}}\hat{p})\,, \quad \overline{Z}=\exp(i \sqrt{\frac{2\pi }{D}}\hat{q})\,.
\end{equation}

Thus all displacement errors smaller than $\sqrt{\frac{\pi}{2D}}$ can be corrected. However, notice that the logical states $\ket{j}$ in the $Z$-basis are given as
\begin{equation}
	\ket{j}=\sum_{k\in \mathbb{Z}}\ket{\hat{q}=\sqrt{2\pi D}\left(k+\frac{j}{D}\right)}\,.
\end{equation} 
The codewords consist of a infinite series of delta peaks in position or momentum representation such that the states are unphysical, because they are not normalizable and have infinite energy. Thus one needs to consider approximate GKP states, where we replace the delta peaks by narrow Gaussian peaks and we also consider an overall Gaussian envelope in order to make the state normalizable. 
Such a state can be obtained by applying coherent, Gaussian displacements on an ideal codewords. There are multiple approximations known in the literature which have been shown to be equivalent \cite{gkp_equivalence}.
 In this article, we replace the coherent Gaussian displacements by incoherent ones, simplifying the calculations.
This can be understood as the result of a unphysical limit of a twirling operation \cite{wang2019quantum,PhysRevA.103.022404} acting on a state with coherent displacements similar to the qubit case where it is also possible to reduce arbitrary noise to Pauli channels by applying twirling operations. Thus the resulting state is noisier such that we obtain a conservative estimate of the error correction properties.

One main advantage of this GKP encoding is that all Clifford operations acting on the GKP code can be implemented by Gaussian operations. Additionally, Pauli-measurements can be implemented by using homodyne-measurements. Furthermore, GKP syndrome measurements, which can be implemented by GKP states and Gaussian operations, applied to the vacuum state are known to produce states that can be distilled to magic states \cite{all_gaussian}. Thus, the generation of the GKP states is the only needed non-Gaussian element for a universal set of quantum gates. An all-Gaussian system can be simulated efficiently \cite{CVGottesmann-Knill}.

\section{Review of syndrome measurements}\label{sec:three}
We consider a concatenation of GKP qudits with qudit stabilizer codes. We refer to the syndrome measurement where we obtain information about the small shifts needed for correcting the GKP qudit as GKP syndrome, while we will refer to the syndrome obtained by measuring the stabilizer generators as stabilizer syndrome.
\subsection{Stabilizer syndrome}
The syndrome of a stabilizer code which encodes $k$ logical qudits into $n$ physical qudits is formally obtained by measuring all $n-k$ stabilizer generators (for $D$ prime, otherwise up to $2n$).
When coupling ancilla qubits with data qubits for obtaining the code syndrome it is highly desirable that every ancilla qubit only couples with a single data qubit in order to prevent a single error of the ancilla propagating onto multiple data qubits. One such scheme is the Steane error correction \cite{Steane_EC} where the $n$ data qubits are coupled with $2n$ ancilla qubits by transversal CNOT gates. The CNOTs act as the identity on the logical level for this ancilla such that we learn the error syndrome but gain no information about the encoded quantum information. In the special case of a CSS code the $2n$ qudit ancilla states can be decomposed into the two logical codewords $\ket{\overline{+}}$ ($\ket{\overline{0}}$) being target (control) of the transversal CNOTs and measured in the $Z$ ($X$) basis.

A different approach is the so called Knill scheme\cite{Knill_teleportation}, where the ancilla is given by a logical Bell state and Bell measurements are applied to the data qubits and one half of the logical Bell state. In the original paper the scheme was only proven for qubits, but in App. \ref{app:knill} we show that it also works for $D>2$ CSS-codes.
\subsection{GKP syndrome}
Let us begin to review the different known methods for obtaining the GKP syndrome. The schemes can be put into two categories. On the one hand we have sequential measurements as the one proposed in the original GKP paper \cite{gkp}, which is inspired by the Steane error correction scheme for CSS qubit codes \cite{Steane_EC}, and further improved schemes reducing the experimental resources \cite{Knill_Glancy,gkp_offline}. On the other hand we have a teleportation based scheme \cite{towards_grid_states,all_gaussian} inspired by Knill's error correction by teleportation \cite{Knill_teleportation} which only started to gain more interest recently \cite{bosonic_gate_teleportation}.  

\subsubsection{Steane scheme}
Now let us further discuss the sequential scheme. For square GKP qubits the Steane error correction scheme (Fig. \ref{fig:steane_ec}) was proposed for performing the syndrome measurement. First we have one code block containing the data and two ancilla code blocks being in the $\ket{+}$ and $\ket{0}$ state.
In order to obtain the syndrome information of the modular position a  CNOT is applied to the data code (control) and the first ancilla code (target) and the mode of the first ancilla code is measured in the position quadrature. Similarly, we obtain the modular momentum stabilizers by applying a CNOT to the second ancilla code (control) and data code (target) and the mode of the ancilla code is measured in the momentum quadrature. In the codespace this acts as the identity and therefore by obtaining the error syndrome we do not obtain information about the logical state.
For the square GKP code CNOT gates are implemented by CSUM gates ($\exp\left(-i \hat{q}_1\hat{p}_2\right)$) which can be decomposed into two beam splitters and two squeezing operations. From an experimental point of view arbitrary passive linear optical transformations, decomposable into beam splitters and phase shifters, are easy to implement while squeezing operations are not that simple to implement (highest squeezed vacuum state 15 dB \cite{highest_exp_squeezing}). Furthermore, it is hard to implement an operation which acts as the squeezing operation on arbitrary input states. Thus these squeezing operations are typically implemented via gate teleportation with an, ideally infinitely, squeezed ancilla state \cite{squeezing_teleportation} and have already been used for implementing a CSUM gate experimentally \cite{loock_experimental_csum}. However, infinitely squeezed vacuum states are unphysical and can only be approximated by highly squeezed vacuum states resulting in an approximation error. Thus, it is beneficial to avoid inline squeezing and use offline squeezing whenever possible. 

\subsubsection{Knill-Glancy scheme}
The Knill-Glancy scheme \cite{Knill_Glancy} (Fig. \ref{fig:knill_glancy}) was proposed for a square lattice (although it is easy to see that it also works for rectangular lattices) GKP code and it can be understood as a variation of Steane error correction where the CSUM gate is replaced by a 50:50 beam splitter followed by a squeezing operation with a squeezing factor $\sqrt{2}$.
Independently from our work, it was recently shown in Ref \cite{gkp_offline} that the Knill-Glancy scheme is equivalent to a scheme where no inline squeezing is used (Fig. \ref{fig:improved_knill_glancy}), but one of the two ancilla GKP states is squeezed by a factor of $\sqrt{2}$. In section \ref{sec:knill-glancy} we will show that these improvements also work for arbitrary GKP codes. Furthermore, this improves the noise introduced by finite squeezing and there also exists a similar scheme which also gives the syndrome information of a high level CSS code.
\begin{figure}
	\subfloat[]{\label{fig:steane_ec}\includegraphics{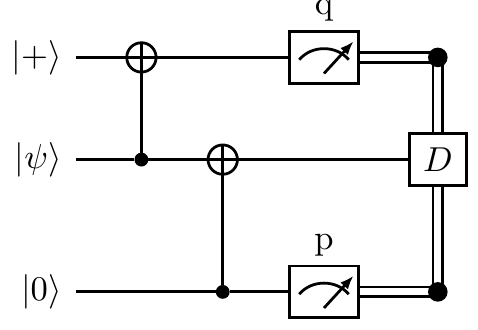}}\\
	\subfloat[]{\label{fig:knill_glancy}\includegraphics{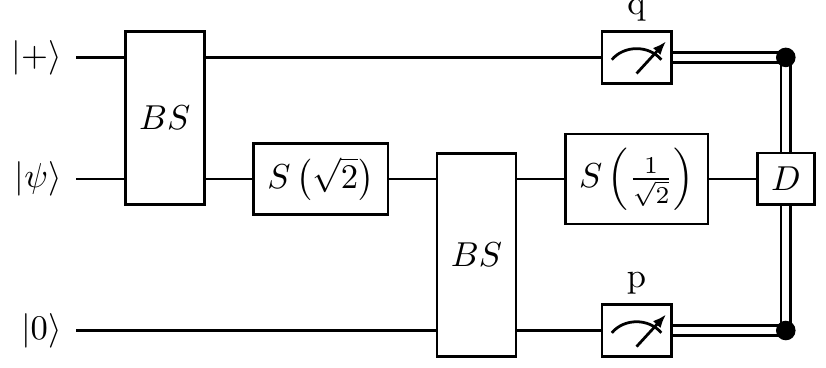}}\\
	\subfloat[]{\label{fig:improved_knill_glancy}\includegraphics{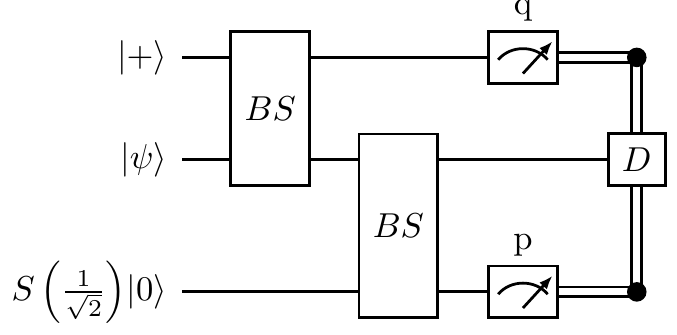}}
	\caption{Different methods to obtain the syndrome information of a square GKP code (a) Steane-inspired approach introduced in \cite{gkp}. The CNOT gates are implemented by CSUM gates where each can be decomposed into two beam splitters and two squeezers. (b) Knill-Glancy scheme \cite{Knill_Glancy} where each CSUM gate is replaced by a single beam splitter and squeezer (c) improved Knill-Glancy scheme where we only need beam splitters and an offline-squeezed state.}
\end{figure}

\section{Improvement of Syndrome Measurements}\label{sec:four}

In many works \cite{kosuke_minimal_measurement,PhysRevA.101.012316,PhysRevA.99.032344,PhysRevX.8.021054,PhysRevA.102.052408} concatenations of GKP codes with higher-level qubit codes are considered and the syndrome measurements of the GKP code and the high-level code are done independently.
This means one first obtains the GKP syndrome information for correcting the small shifts and then one obtains the syndrome information of the higher level code for correcting the larger shifts.
Each of these measurements typically make use of a GKP-like ancilla state which is costly.
Therefore, we discuss alternative measurement schemes which only make use of a minimal number of measurements.

Let us begin with the qubit case where we concatenate an $n$-mode GKP code with an arbitrary stabilizer code. We show that by using $2n$ measurements we not only obtain the GKP syndrome information of the $n$-mode GKP code, but also the additional syndrome information for decoding the higher level code.
This can be seen rather easily by describing the whole concatenated code by a set of independent (Weyl-Heisenberg) stabilizer generators. 
The stabilizer of the GKP code can be obtained by applying logical Pauli operators twice. In an naive approach one would construct a set of stabilizer generators by first considering the stabilizers of the GKP code and then adding the qubit stabilizers. However, these stabilizer generators are not independent, because we can apply the qubit-like stabilizers twice in order to obtain stabilizer generators of the GKP code. Thus we can remove these, such that we still have $2n$ independent stabilizer generators. When encoding quantum information into a code we have a product state of ancillas in Pauli eigenstates. This state can therefore be described by $2n$ independent stabilizer generators. In order to do the encoding we perform a sequence of Clifford (Gaussian) operations, changing the actual stabilizer generators but their number remains invariant. Thus, we only need to measure the $2n$ independent stabilizer generators in order to obtain full syndrome information. Furthermore, we can generalize this result to arbitrary qudit dimensions $D$ by using a different proof technique based on lattice theory instead due to technical difficulties. The proof is given in App. \ref{app:minimal set}. This result is quite remarkable, because one needs no additional measurements in order to obtain the additional syndrome information of the higher level code, which consists of up to $2n$ (Pauli) stabilizer generators for the case of non-prime $D$. While such minimal measurements have been proposed in an ad-hoc way for some codes \cite{kosuke_minimal_measurement} \footnote{In Ref. \cite{kosuke_minimal_measurement} the authors proposed to measure the stabilizers $\exp(i\sqrt{\pi}(\hat{q}_1+\hat{q}_2)),\exp(i\sqrt{\pi}(\hat{q}_2+\hat{q}_3)),\exp(i2\sqrt{\pi}\hat{q}_3)$}, in the next sections we discuss two schemes which allow us to obtain the full syndrome information in a systematic way for general GKP codes concatenated with stabilizer codes employing only GKP-like states, linear optics and homodyne measurements. 

\subsection{Teleportation-based error correction}
\subsubsection{GKP syndrome}

Here we will discuss how to obtain the syndrome information of a general GKP code which will be a building block for the scheme that additionally also gives the syndrome information of the high-level code.
Let us first discuss the special case of a GKP qudit code using a square lattice and show that it is possible to obtain the syndrome information without needing in-line squeezing operations. 
Recall that the (Weyl-Heisenberg) stabilizers of such a code encoding a qudit (with dimension $D$) are given by
\begin{equation}
\exp\left(-i \hat{p} \sqrt{2\pi D}\right) \text{ and } \exp\left(i \hat{q} \sqrt{2\pi D}\right)\,,
\end{equation} 
where $\hat{q}$ and $\hat{p}$ are quadrature operators of the harmonic oscillator.
The logical Pauli operators of the GKP code are given by
\begin{align}
\overline{X}=\exp\left(-i \hat{p}\sqrt{\frac{2\pi}{D}}\right) \text{ and } \overline{Z}=\exp\left(i \hat{q}\sqrt{\frac{2\pi}{D}}\right)\,.
\end{align}
Therefore, the logical information encoded in $\ket{\psi}^{GKP}$ is encoded in modular quadrature operators.
Let us consider a qudit Bell state 
\begin{equation}
\ket{\Phi_{00}}_{2,3}:=\frac{1}{\sqrt{D}}\sum_{k=0}^{D-1}\ket{k,k}_{2,3}\,,
\end{equation}
which can also be described by the two (qudit) stabilizers $X_2X_3$ and $Z_2 Z_3^{-1}$ \footnote{Even for non-prime $D$ there are no additional stabilizers needed.}.
We can construct all other Bell states via
\begin{equation}
\ket{\Phi_{rs}}_{2,3}:=\overline{X}^r_2 \overline{Z}^s_2\ket{\Phi_{00}}_{2,3}\,,
\end{equation}
where $r,s \in \mathbb{Z}_D$.
If we have such a qudit Bell state encoded in two GKP qudits, the Bell state stabilizer conditions are equivalent to
\begin{align}
\left(\hat{p}_2+\hat{p}_3 -s \sqrt{\frac{2\pi}{D}} \mod \sqrt{2\pi D}\right)\ket{\Phi_{rs}}_{2,3}^{GKP}=0\,,\\
\left(\hat{q}_2-\hat{q}_3-r \sqrt{\frac{2\pi}{D}} \mod \sqrt{2\pi D}\right)\ket{\Phi_{rs}}_{2,3}^{GKP}=0\,.
\label{eq:GKP_bell_stabilizer}
\end{align}
Notice that these two stabilizers alone do not define a GKP Bell state uniquely, because for example an infinitely squeezed two-mode squeezed state also satisfies these conditions.
\begin{figure}
\includegraphics[width=0.4\textwidth]{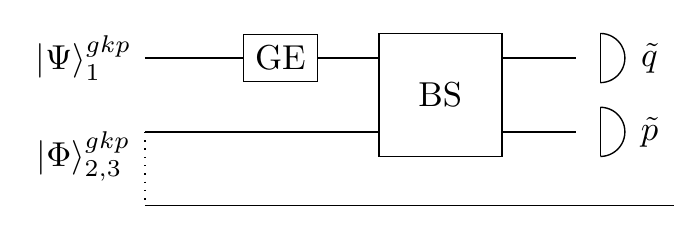}
\caption{A logical qudit is encoded in mode 1 and is affected by Gaussian errors ("GE"). Then it is coupled with one half of a logical Bell state pair via a balanced beam splitter. The position and momentum quadratures of the beam splitter output are measured. We can use these measurement results for error correction of the GKP code and for correcting the higher level code. The teleportation protocol actually also involves applying conditional displacements. However, when considering multiple rounds of this teleportation protocol we actually do not need to apply the displacement in every step, but we can keep track of the displacements and apply only one displacement in the end, because they only shift the measurement results of the next error correction cycle. This is similar to the Pauli-frame for qubits. The dotted line denotes that modes 2 and 3 share an entangled state.} 
\label{fig:teleportation_scheme}
\end{figure}

We consider a beam splitter with the transformations
\begin{align}
\hat{\tilde{q}}=\frac{1}{\sqrt{2}}\left(\hat{q}_1-\hat{q}_2\right)\,, \label{eq:q_bs}\\
\hat{\tilde{p}}=\frac{1}{\sqrt{2}}\left(\hat{p}_1+\hat{p}_2\right)\,.\label{eq:p_bs}
\end{align}
Let us first assume an ideal ancilla state $\ket{\Phi}_{2,3}^{GKP}$ and also an arbitrary ideal data state $\ket{\psi}_1^{GKP}$.
Now we first show that we can use the circuit illustrated in Fig.  \ref{fig:teleportation_scheme} for teleporting the information encoded in the GKP qudit:
\begin{align}
\nonumber&\quad\hat{q}_1 \mod \sqrt{2\pi D}\ket{\psi}_1^{GKP}\ket{\Phi_{rs}}_{2,3}^{GKP}\\\nonumber&= \hat{q}_1 -\hat{q}_3+\hat{q}_3\mod \sqrt{2\pi D}\ket{\psi}_1^{GKP}\ket{\Phi_{rs}}_{2,3}^{GKP}\\
\nonumber&=\hat{q}_1 -\hat{q}_2+r \sqrt{\frac{2\pi}{D}}+\hat{q}_3\mod \sqrt{2\pi D}\ket{\psi}_1^{GKP}\ket{\Phi_{rs}}_{2,3}^{GKP}\\&=\hat{q}_3+\sqrt{2}\hat{\tilde{q}}+r \sqrt{\frac{2\pi}{D}} \mod \sqrt{2\pi D}\ket{\psi}_1^{GKP}\ket{\Phi_{rs}}_{2,3}^{GKP}\,.
\end{align}
Here we only used the stabilizer property of the GKP Bell state. If we measure $\hat{\tilde{q}}$ and shift $\hat{q}_3$ by $\sqrt{2}\tilde{q}+r \sqrt{\frac{2\pi}{D}}$, we then have successfully teleported the information encoded in the modular position quadrature. Similarly we can teleport the information encoded in the modular momentum quadrature by measuring $\hat{\tilde{p}}$ and shifting $\hat{p}_3$ accordingly:
\begin{align}
\nonumber&\quad \hat{p}_1 \mod \sqrt{2\pi D}\ket{\psi}_1^{GKP}\ket{\Phi_{rs}}_{2,3}^{GKP}\\ \nonumber&= \hat{p}_1 -\hat{p}_3+\hat{p}_3\mod \sqrt{2\pi D}\ket{\psi}_1^{GKP}\ket{\Phi_{rs}}_{2,3}^{GKP}\\ \nonumber
&= \hat{p}_1+\hat{p}_2+\hat{p}_3-s\sqrt{\frac{2\pi}{D}}\mod \sqrt{2\pi D}\ket{\psi}_1^{GKP}\ket{\Phi}_{2,3}^{GKP}\\&= \hat{p}_3+\sqrt{2}\hat{\tilde{p}}-s\sqrt{\frac{2\pi}{D}} \mod \sqrt{2\pi D}\ket{\psi}_1^{GKP}\ket{\Phi}_{2,3}^{GKP}\,.
\end{align}
The demonstrated teleportation is exactly the well-known qudit-teleportation applied to GKP qudits, if we assume that the GKP states are in their codespace such that they are well defined qudits. We already saw that we can use the measurement result from the two homodyne detections for shifting the GKP states back into the codespace. Thus we can understand the protocol in the following way: First we use the homodyne measurement for correcting small shifts to the nearest codeword in mode 1 and then we perform a common qudit teleportation protocol, teleporting the encoded information into mode 3. 
Therefore, the only actually interesting observation lies in the fact that the homodyne measurements give us information about the measured GKP Bell state and the GKP syndrome information at the same time. Also notice that the displacement for correcting the small shift together with the displacement from the teleportation protocol reduces to a single GKP Pauli-operation.
\subsubsection{Incoherent noise}
Up to now, we considered only ideal GKP states which are clearly unphysical since they are not normalizable and have infinite energy. Realizable approximate GKP states are for example given by a  coherent superposition of Gaussian displacements acting on an ideal GKP state. For simplicity we will consider an error model of finite squeezing where we replace the coherent displacements by stochastic ones. 

 First we will show that we can correct Gaussian shift errors acting on the data mode, while assuming noiseless ancilla states. Later we show that we can also consider noisy ancilla states (in our error model) and this is equivalent to considering noiseless ancilla states, but more noise on the data mode.

In order to perform error correction of the GKP code we actually have to measure $\hat{q} \mod \sqrt{\frac{2\pi}{D}}$ and $\hat{p} \mod \sqrt{\frac{2\pi}{D}}$ which give the result '0' for square-lattice GKP codewords. For correcting shift errors we simply apply the smallest shift needed to obtain a codeword again:
\begin{align}
\label{eq:gkperrorsyndrome}
\nonumber&\sqrt{2}\hat{\tilde{q}} \mod \sqrt{\frac{2\pi}{D}}\ket{\psi}_1^{GKP}\ket{\Phi_{rs}}_{2,3}^{GKP}\\&=\hat{q}_1-\hat{q}_2 \mod \sqrt{\frac{2\pi}{D}}\ket{\psi}_1^{GKP}\ket{\Phi_{rs}}_{2,3}^{GKP}\nonumber \\&= \hat{q}_1 \mod \sqrt{\frac{2\pi}{D}}\ket{\psi}_1^{GKP}\ket{\Phi_{rs}}_{2,3}^{GKP}\,.
\end{align}
For the last step we used our assumption that mode 2 is part of a perfect GKP state and thus $\hat{q}_2 \mod \sqrt{\frac{2\pi}{D}}=0\,.$
Hence, we know the syndrome information and can apply the corresponding correction shift onto mode 3. When we consider the shift from the teleportation and the correction shift together, we obtain simply a Pauli operator. The same reasoning holds for the modular momentum quadrature.

Let us now consider also noisy ancilla states (assuming a random shift model). Let $v_i$ denote the random variable describing the momentum shift acting on mode $i$ and $u_i$ denote the corresponding 
random variable for the position shifts. As it can be seen in Eq. (\ref{eq:p_bs}) a shift of $\hat{p}_1$ by $v_1$ and a shift of $\hat{p}_2$ by $v_2$ have the same outcome of the measurement as a shift of $\hat{p}_1$ by $v_1+v_2$. Similarly one can show by using Eq.(\ref{eq:q_bs}) that the position shifts acting on modes 1 and 2 have the same effect on the measurement outcome as a shift of $\hat{q}_1$ by $u_1-u_2$.
  We interpret the shift errors on mode 2 as additional shifts on mode 1 and the shifts of mode 3 are the finite squeezing shifts of the data GKP qudit in the next error correction step. Also notice that there is no need (in the random shift model) to perform the displacement operations after each correction step, but one can keep track of them similar to a Pauli frame.
We did not assume a particular distribution of the random variables describing the shift errors and their possible correlations. We will do this later when we discuss different approaches of generating GKP Bell states.

\subsubsection{General GKP codes}

Let us now generalize this scheme from a GKP code based on a square lattice to general GKP codes which may even be defined on $n$ modes. The stabilizer generators span a lattice in the $2n$-dimensional phase space.
The corresponding logical Pauli operators are of the form $\overline{X}_j=\exp(-i a_j \hat{P}_j)$ and $\overline{Z}_j=\exp(i b_j \hat{Q}_j)$ where $\hat{Q}$ and $\hat{P}$ are linear combinations of quadrature operators, fulfilling the canonical commutation relation $[\hat{q}_k,\hat{p}_l]=i\delta_{kl}$, and some $a_j,b_j\in\mathbb{R}$. Since we are considering quantum teleportation, our resource states must be Bell states encoded in the same code as the input mode. For measuring the Bell states we only need to measure $\overline{X}_{j,1}\overline{X}_{j,2}=\exp(-i a_j (\hat{P}_{j,1}+\hat{P}_{j,2}))$ and $\overline{Z}_{j,1}\overline{Z}_{j,2}^{-1}=\exp(-i b (\hat{Q}_{j,1}-\hat{Q}_{j,2}))$.  However, the observables $\hat{P}_{j,1}+\hat{P}_{j,2}$ and $\hat{Q}_{j,1}-\hat{Q}_{j,2}$ commute such that we can measure them simultaneously instead of only measuring the quantities modulo some constant. We have shown that it is possible to interpret mode 2 as noiseless when considering more noise on mode 1. Measuring the relevant syndrome means measuring $\hat{Q}_{j,1} \text{mod} \frac{2\pi}{D a_j}$ and $\hat{P}_{j,1} \text{mod} \frac{2\pi}{D b_j}$. However, we know that the state in mode 2 is part of the logical Bell state and therefore the relevant modulos of mode 2's quadrature operators are zero. Thus, we can obtain the modulo of the quadrature operators of mode 1 by applying the mod function on the measurement outcome of the commuting observables $\hat{Q}_{j,1}-\hat{Q}_{j,2}$ and $\hat{P}_{j,1}+\hat{P}_{j,2}$. Recall that $\hat{P}$ and $\hat{Q}$ are linear combinations of quadrature operators and therefore we can measure them with passive, linear optics and homodyne measurements. 

Let us first explain why this is possible in the single-mode case. In order to measure $\hat{Q}_{1}-\hat{Q}_{2}$ and $\hat{P}_{1}+\hat{P}_{2}$ we have to couple modes 1 and 2 with a 50:50 beam splitter and then we need to measure the resulting operators $\hat{\tilde{Q}}_2$ and $\hat{\tilde{P}}_1$ which are both linear combinations of position and momentum operators. Equivalently, it is possible to represent this linear combination in polar coordinates $\alpha \hat{q}+\beta \hat{p}=\gamma \left(\cos(\theta)\hat{q}+\sin(\theta)\hat{p}\right)$ with $\alpha,\beta,\gamma,\theta\in \mathbb{R}$. Thus, the measurement of the linear combination can be understood as the measurement of a rotated quadrature which was squeezed in the direction of $\theta$ where the squeezing corresponds to the factor $\gamma$. However, we can also understand the measurement of the linear combination as a measurement of the rotated quadrature operator where we classically rescale the measurement outcome by a factor $\gamma$. In other words we have replaced the squeezing operation by multiplication in a post-processing step of the measurement data. Let us now discuss the general multi-mode case ($n\geq1$). We need to measure all $\hat{\tilde{P}}_{j}$ and $\hat{\tilde{Q}}_{j}$ ($j\in\{1,\dots,n\}$). Here, we only consider the case of $\hat{\tilde{P}}_{j}$, because the other one works analogously. 
In the symplectic representation $\tilde{P}_j$ of the operators  $\hat{\tilde{P}}_j$, we see that $\text{span}_\mathbb{R}(\tilde{P}_1,\dots,\tilde{P}_n)$ generates a $n$-dimensional linear subspace of the phase space. 
However, notice that the basis $\{\tilde{P}_1,\dots,\tilde{P}_n\}$ does not necessarily form an orthonormal basis. Let $\{\tilde{\xi}_1,\dots,\tilde{\xi}_n\}$ be an orthonormal basis of the same linear subspace. 
Then there exists an invertible ($n\times n$) matrix $A$ relating both bases via
\begin{align}
	\tilde{P}_j=\sum_{i=1}^nA_{ji}\tilde{\xi}_i.
\end{align}
Thus, we can implement a measurement of $(\hat{\tilde{p}}_j,\dots,\hat{\tilde{p}}_2)$ by measuring  $(\hat{\tilde{\xi}}_1,\dots,\hat{\tilde{\xi}}_n)$ and applying the matrix A onto the classical measurement data. Since $\{\tilde{\xi}_1,\dots,\tilde{\xi}_n\}$ is an orthonormal basis, we can employ linear optical transformations, which induce arbitrary orthogonal transformations on this $n$-dimensional linear subspace (symplectic, orthogonal transformations on the whole $2n$-dimensional phase space), and quadrature measurements of independent modes in order to measure $\{\tilde{\xi}_1,\dots,\tilde{\xi}_n\}$.

   Therefore, for measuring the syndrome of any GKP code we only need offline-squeezing operations and all inline operations are passive, linear optics and homodyne measurements. This result is not obvious, because initially we only knew that it is possible for the square lattice-GKP code.
   A straightforward way of showing this generalization would be by going from a general lattice to a square one, performing the syndrome measurement and going back to the general lattice. The transformation between two GKP codes is realized by a Gaussian operation, which in general involves squeezing operations, thus the resulting circuit for performing the syndrome measurement is given by a linear optical operation conjugated by a Gaussian one. However, after conjugation we do not necessarily obtain a linear optical operation (for a single-mode counter-example consider e.g. a $\frac{\pi}{2}$ phase-shift conjugated by a squeezing operation).

\subsection{Obtaining the higher-level syndrome}
Let us furthermore not only consider GKP qudit codes, but a concatenation with a high-level $[n,k,d]_D$ stabilizer code built with GKP qudits. 
Here, in order to obtain the syndrome of the high-level code we explicitly perform Knill's error correction by teleportation scheme \cite{Knill_teleportation}.
The qudit-teleportation in the Knill scheme is here given by the GKP-teleportation discussed previously, which is also capable of additionally obtaining the syndrome of the GKP-code provided the resource-state is a GKP-Bell state. A logical Bell state is given by a superposition of GKP-Bell states, because the set of Bell states forms an orthonormal basis for two qudits. As it can be seen in Eq. \ref{eq:gkperrorsyndrome}, we can obtain the error syndrome of the GKP code for any GKP Bell state and therefore by linearity also when using the logical Bell state. We can then use this syndrome information for mapping mode 1 into the code space (via classical post-processing) and then we can correct errors of the high-level code simply by applying Knill's error correction by teleportation protocol and treating each of the three modes as a qudit.

As a consequence, this scheme demonstrates that on the one hand one does not need to measure the $2n$ stabilizers in order to obtain the syndrome of the individual GKP qudits followed by an additional measurement of the high-level code's stabilizer, but $2n$ measurements suffice and on the other hand inline squeezing is neither needed for correcting small shifts on GKP qudits nor for obtaining the high-level error syndrome. In the original paper \cite{Knill_teleportation} it was shown that the Knill scheme works for arbitrary qubit codes. In App. \ref{app:knill} we show that it also works for qudit CSS codes with an arbitrary qudit dimension $D$. 
Furthermore, for non-CSS codes one can find a similar scheme where we need an ancilla state different from a logical Bell state. This difference comes from the asymmetry in the stabilizers $Z_1 Z_2^{-1}$ and $X_1 X_2$ of a qudit Bell state. $X$ and $Z$ are treated differently in the general qudit case, while in the special case of qubits, $X$ and $Z$ are treated equally, because the Pauli operators are self-inverse. 

 \subsection{Example: three-mode code}
As an example let us consider the error correction of the concatenation of square GKP qubits with the three-qubit bit-flip code. It was already shown in Ref. \cite{kosuke_minimal_measurement} that the codes performance can be improved significantly by using the complete (analog) error syndrome of the GKP syndrome measurement in the decoder of the high-level stabilizer code. This means we assign a value of reliability to every single GKP error correction, i.e. the further we are away from a codeword the lower the value of reliability.  As it can be seen in Fig \ref{fig:3modeexample} we perform error correction by coupling the (three-mode) input state with one half of an ancilla state consisting of a Bell state encoded in two three-qubit codes with transversal 50:50 beam splitters. Then we perform homodyne measurements on the first six modes which allow us to calculate the needed correction shift as the six measurements contain the same information as the measurement of the code's six stabilizer generators (explicit stabilizers are given in App. \ref{app:analoginfo}).
\begin{figure}
	\centering
	\includegraphics[width=0.7\linewidth]{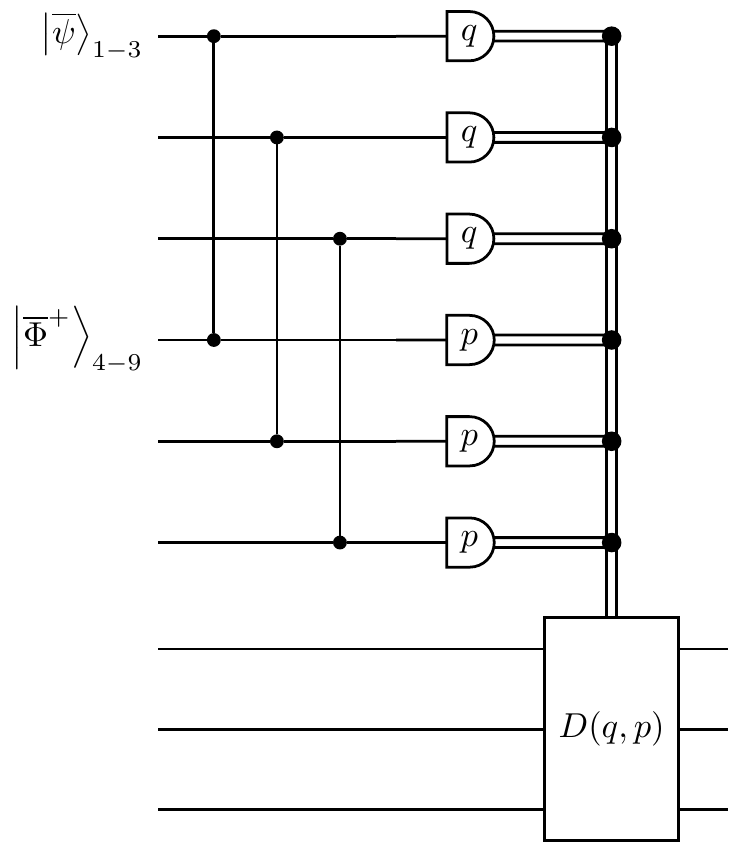}
	\caption{Error correction by teleportation for the concatenation of a square GKP code with a three-qubit repetition code. In the first three modes we have the noisy input encoded in the code. In modes 4-9 we have an encoded Bell state of the full code where we then couple the first half with the three input modes transversally with beam splitters. We then perform homodyne measurements and apply conditional displacements on the second half of the encoded Bell state. }
	\label{fig:3modeexample}
\end{figure}

When we consider ideal codes followed by i.i.d. Gaussian noise, we can calculate the resulting error channel we obtain when using the analog information in an exact approach instead of requiring simulations as in \cite{kosuke_minimal_measurement,analog_gkp_repeater}. The crucial observation allowing this is that the concatenation of square GKP codes with a stabilizer code is a code with a more sophisticated lattice in the phase space. The exact calculation can be performed by calculating the voronoi cells for $\mathcal{L}^\bot/\mathcal{L}$. More details can be found in App. \ref{app:analoginfo}. 

\subsection{Linear-optics state generation}
\subsubsection{GKP Bell states}
Up to now we have not assumed anything about the random variables despite being Gaussian. However, depending on the actual state generation there might be correlations involved. For example let us consider the case where we generate a square-GKP Bell state by coupling a noisy $\ket{+}_2$ and a noisy $\ket{0}_3$ with a CSUM gate. We further assume that the noise of both GKP qubits consists of independent, identically distributed  (i.i.d.) Gaussian shifts in position ($u_2^*,u_3^*$)  and momentum ($v_2^*,v_3^*$) with variance $\Delta^2$. Due to the CSUM gate we see that the random variables $u_2=u_2^*, v_2=v_2^*-v_3^*$ and $u_3=u_2^*+u_3^*,v_3=v_3^*$ contain some correlations. The states of modes 2 and 3 are used in different error correction steps and in usual decoding schemes (quite recently decoders making use of the syndrome information of multiple rounds have been considered \cite{gkp_offline})  it is assumed that each correction step only uses local information, neglecting the correlations. Therefore, it seems that the CSUM gate amplifies the noise such that we have momentum shifts with variance $2\Delta^2$ in mode 2 and position shifts of variance $2\Delta^2$ in mode 3. When additionally considering the noise from mode 1 we obtain the same result as in Ref. \cite{Knill_Glancy} that the sum of initially three random variables of individual variance $\Delta^2$ should be smaller than $\sqrt{\frac{2\pi}{D} }/2$. Thus, in terms of thresholds we do not gain anything by using a teleportation scheme instead of the Knill-Glancy scheme.

 Let us now consider a different scheme for generating Bell states as introduced in Ref. \cite{bosonic_gate_teleportation} using only a beam splitter to couple two noisy GKP-like states. Thanks to the simple linear optical coupling the resulting random variables $u_2,u_3,v_2,v_3$ are all i.i.d. with variance $\Delta^2$. This allows us to use simple decoders depending only on the syndrome information from this correction step without loosing the capability of correcting errors. Thus, in an error correction by teleportation we only need to consider $2\sigma_{sq}^2$ using this beams splitter approach instead of $3\sigma_{sq}^2$ when using CSUMs for generating the Bell states and neglecting correlations between different teleportation steps.
 
 In Ref. \cite{bosonic_gate_teleportation}, it was shown for a square GKP qubit code that a Bell state can be obtained by mixing two 'qunaught' states at a 50:50 beam splitter by using the state picture. Here, we will first reproduce this result in the stabilizer formalism, such that it will be easy to generalize the result to more general GKP codes.
 
 Now we will consider the slightly more general case of a square lattice GKP code with even qudit dimension $D$.
 Consider the two single-mode states described by the stabilizer group generated by the set of stabilizer generators 
 \begin{align}
 	\label{fig:inputstate}
 	\nonumber\{\exp(i\sqrt{\pi D}\hat{q}_1),\exp(i\sqrt{\frac{4\pi}{D}}\hat{p}_1),\\\exp(i\sqrt{\frac{4\pi}{D}}\hat{q}_2),\exp(i\sqrt{\pi D}\hat{p}_2)\}\,.
 \end{align}
 Let us apply a 50:50 beam splitter mixing both modes, resulting in the stabilizer generators
 \begin{align}
 \nonumber\{	\exp(i\sqrt{\frac{\pi D}{2}}(\hat{q}_1+\hat{q}_2)),\exp(i\sqrt{\frac{2\pi}{D}}(\hat{p}_1+\hat{p}_2))\,,\\
 	 	\exp(i\sqrt{\frac{2\pi}{D}}(\hat{q}_1-\hat{q}_2)),\exp(i\sqrt{\frac{\pi D}{2}}(\hat{p}_1-\hat{p}_2))\}\,.
 \end{align}
 This set of stabilizer generators already describes the canonical Bell state of the square GKP-code. However, we will consider a different set of stabilizer generators  by multiplying stabilizers such that it is more obvious that this set stabilizes the Bell state:
 \begin{align}
 	\label{fig:outputstate}
 \nonumber\{	\exp(i\sqrt{2\pi D}\hat{q}_1),\exp(i\sqrt{\frac{2\pi}{D}}(\hat{p}_1+\hat{p}_2))\,,\\
	\exp(i\sqrt{\frac{2\pi}{D}}(\hat{q}_1-\hat{q}_2)),\exp(i\sqrt{2\pi D}\hat{p}_1)\}\,.
 \end{align} 
 Here we multiplied the 1st (4th) stabilizer generator $D/2$ times with the 3rd (2nd) stabilizer generator. Since the number of multiplications must be an integer, we have the restriction that $D$ has to be even.
   For odd $D$ it seems that no scheme using only linear optics and two product states is possible (a simple beam splitter solution does not exist), but we have no rigorous proof for this. We obtained our results ($n=1$) by going through the above steps in opposite direction in order to obtain the input state. We started with the stabilizers of the desired Bell state (Eq. \ref{fig:outputstate}), applied an arbitrary two-mode passive linear-optical transformation and tried to multiply stabilizers such that there are only local stabilizer pairs for each mode (Eq. \ref{fig:inputstate}). Notice that this arbitrary operation can be decomposed into a relative phase followed by a beam splitter followed by another relative phase and a global phase. The two phases applied after the beam splitter are single-mode operations and are therefore useless for changing entanglement, so that we can ignore them. 
 
 Furthermore, it is also possible to show similar results (for even $D$) not only for the square lattice GKP code, but for more general ones. However, this is meant in the sense that we can obtain a $2n$-mode Bell state by mixing two $n$-mode states at $n$ 50:50 beam splitters in a transversal fashion. The proof for this is given in appendix \ref{app:bell_decomp}.
 
 \subsubsection{Higher encoded GKP Bell states}
 The most important ingredient for the error correction by teleportation of the high level code is the generation of the logical Bell state.
  Here, we discuss the possibility of generating these high level states by sending product states of single-mode grid states through a linear optical network.
   Such a generation would be nice for two reasons. First, the linear optical operations do not amplify the noise (we assume the initial noise is isotropic) and second inline squeezing is experimentally demanding and usually implemented via the teleportation of a finitely squeezed state necessarily introducing errors due the finite squeezing.
 It is easy to see that this linear optical network is unable to transform small GKP codes and states into a concatenation of a GKP code with a stabilizer code, because linear optical operations are represented by symplectic, orthogonal matrices in phase space and due to the orthogonality the code distance remains invariant (details are given in App. \ref{app:codedistance_preservation}). 
 However, while this shows that it is impossible to encode arbitrary quantum information into a code of higher code distance using linear-optical transformations, it might still be possible to generate some codewords which can then be used for performing error correction.

As the next step we discuss this loophole for relevant cases.  Remember that linear-optical transformations are represented by orthogonal and symplectic linear maps in the phase space representation. We will now use the orthogonality in order to obtain necessary conditions.
 Thus, we need to check whether the desired state admits a lattice representation with an orthogonal basis. Conditions for the existence of an orthogonal basis are discussed in Ref. \cite{orthogonal_lattice} for so-called construction-A lattices (for every linear code $C\in \mathbb{Z}_p^n$ we can construct a lattice $\{x\in\mathbb{Z}^n|x \mod p\in C\}$), which appear when we consider the concatenation of a  GKP code with a high level CSS code (the codewords of $C$ correspond to the stabilizers of the high level code, while the mod corresponds to the stabilizers of the low level GKP code), where the stabilizers of this concatenation are given by the columns of the matrix 
\begin{align}
	A=\frac{1}{\sqrt{D}}\mathds{1}_{2n\times 2n}\cdot\begin{pmatrix}
	G^X&0\\
	0&G^Z
	\end{pmatrix}\,.
\end{align} Each column of $G^X$, $G^Z$ corresponds to a basis element of the corresponding construction-A lattice and each column of $\frac{1}{\sqrt{D}}\mathds{1}_{2n\times 2n}$ gives the phase space representation of the $X$ and $Z$ operators of the square lattice GKP code.
 Because much experimental effort has been made in order to generate rectangular grid states \cite{Fluehmann2019,Campagne-Ibarcq2020}, it is a relevant question whether these states can be transformed into codewords of the concatenation of the square lattice GKP code with a CSS code by passive linear-optical operations.
  Thus, we want $A=U\cdot A'$ to hold where $U$ is an orthogonal, symplectic matrix describing the passive transformation  and $A'$ is a diagonal matrix denoting the stabilizers of independent rectangular grid states.
   Since $A'$ and $U$ are orthogonal matrices, it is necessary that $A$ and therefore also $G^X$ and $G^Z$ (needs to hold for at least one basis) have to be orthogonal matrices in order for a passive transformation to exist.

Before we consider a large class of CSS codes let us first consider a specific example, namely the three-qubit GKP-GHZ state. Its qubit stabilizer generators are $X_1X_2X_3, Z_1Z_2$ and $Z_2Z_3$. As a consequence we obtain

\begin{align}
	G^X=\begin{pmatrix}
	1 & 0 & 0 \\
	1 & 2 & 0 \\
	1 & 0 & 2
	\end{pmatrix} \,,\qquad
	G^Z=\begin{pmatrix}
	1 &0 &0 \\
	1 & 1 & 0 \\
	0 & 1 & 2
	\end{pmatrix}\,,
\end{align} as a possible basis of the construction-A lattices generated by the code $C=C_1\oplus C_2=\text{span}_{\mathbb{Z}^2}(0,0,0|1,1,1)\oplus \text{span}_{\mathbb{Z}^2}((1,1,0|0,0,0),(0,1,1|0,0,0))$. Since $C_1$ has code distance 3, it is obvious that code $C$ cannot be factored into (permutated) linear subcodes of maximum length 2. Hence, by Ref. \cite{orthogonal_lattice} there exists no orthogonal basis and thus we are not able to generate the GHZ state from single-mode grid states and linear optics.

 In CSS codes the set of $X$-type operators (involving stabilizers and logical operators) corresponds to codewords of $C_Z$ and the set of $Z$-type operators corresponds to codewords of $C_X$. Therefore, all operators stabilizing a logical Pauli-eigenstate correspond to a subcode  of $C_Z\oplus C_X$ using the symplectic representation and its code distance $d(C_Z\oplus C_X)$ is given by $\min\left(d(C_Z),d(C_X)\right)$. We are mostly interested in codes which are able to correct at least arbitrary single qubit errors demanding that the minimum code distance is at least 3.
In Ref. \cite{orthogonal_lattice} it was shown that a construction-A lattice over a binary field can only have an orthogonal basis if the corresponding code can be decomposed in a specific structure with a code distance of at most 2. Thus it is impossible in the qubit case to find such a passive transformation. In the qutrit case we can make a similar argument where the code distance must not be greater than 3 (it might still be impossible for 3), i.e. we can exclude the possibility of a passive transformation for high-distance codes. Up to now we only considered the concatenation with a square lattice GKP code, but in our argument we only used the property that the matrix representing the $X$- and $Z$-operators of the GKP code is orthogonal. Therefore, the result also holds for concatenations involving any GKP code fulfilling this relation. 

Up to now we assumed idealized infinitely squeezed GKP states in the proof of the above no-go statement, but a similar argument also works for the physically more relevant case of approximate GKP states with coherent Gaussian displacement errors where the Gaussian's covariance matrix needs to be proportional to the identity up to symplectic transformations.
We make use of the finite-squeezing stabilizers introduced in Ref \cite{Royer_2020}, where finite squeezing with coherent Gaussian displacement errors (covariance matrix proportional to the identity) are applied by the operator $e^{-\Delta \hat{n}}$ and this transforms the stabilizer of an ideal GKP state $\exp(i\hat{g})$ to \begin{equation}
e^{-\Delta \hat{n}}\exp(i\hat{g})e^{\Delta \hat{n}}=\exp(i\left(\hat{g}\cosh(\Delta^2)+i\hat{\tilde{g}}\sinh(\Delta^2)\right)) \,, \label{eq:fin_sq_stabilizer}
\end{equation} where $\hat{g}$ and $\hat{\tilde{g}}$ are (real) linear combinations of quadrature operators and $\hat{n}$ is the total photon number in all modes.  Using the (canonical extension of the) symplectic representation one can map the stabilizer conditions of the finitely squeezed  states to a lattice embedded in $\mathbb{C}^{2n}$ instead of $\mathbb{R}^{2n}$. Since Gaussian unitary operations do not couple the real and imaginary parts in the symplectic representation, the real part of the lattice also needs to fulfill the orthogonality constraints as for the ideal GKP states independent from the imaginary part.  
Up to scaling factors we have the same problem as in the infinite squeezing case and since scaling factors are irrelevant for orthogonality, we again obtain a no-go result. Let us now briefly show that this holds for all Gaussians with a covariance matrix which is related by a symplectic transformation A to a covariance matrix proportional to the identity. We can see this by first applying $\hat{A}^{-1}$ to the ideal desired state, followed by $e^{-\Delta \hat{n}}$ in order to introduce the isotropic Gaussian noise followed by $\hat{A}$ bringing the covariance matrix to the desired form.
The resulting stabilizer is then given by
\begin{align*}
	\exp(i\left(\hat{A}\hat{A}^{-1}\hat{g}\hat{A}\hat{A}^{-1} \cosh(\Delta^2)+i \hat{A} \hat{\widetilde{A^{-1}gA}}\hat{A^{-1}}\sinh(\Delta^2)\right))\\=
	\exp(i\left(\hat{g}\cosh(\Delta^2)+i  \hat{A} \hat{\widetilde{A^{-1}gA}}\hat{A^{-1}}\sinh(\Delta^2)\right))\,,
\end{align*} which has the exact form as in Eq. \ref{eq:fin_sq_stabilizer}.

As it is impossible to build logical Bell states of a high level GKP code from single-mode grid states with linear optical transformations, one might be wondering if one could use linear optical transformations and two suitable $n$-mode grid states as a resource instead. However, this also turns out to be impossible for simple transversal beam splitters (see App. \ref{app:bell_decomp}), although we have not proven yet the impossibility of this with general linear optics.

An alternative approach to get rid of inline squeezing operations circumventing this no-go was shown in Ref. \cite{xanadu_linearoptics} where the authors propose to generate a $n$-mode GKP-cluster state by applying a linear optical transformation on $4n$ rectangular GKP states, performing homodyne measurements on $3n$ modes and applying conditional displacements.
Thus one might think that one also obtains the advantage of amplifying no noise. While technically true, one adds additional noise due to the additional finitely squeezed GKP states. Strictly speaking this approach would introduce even more noise than the canonical circuit involving (ideal) CZ-gates, because by applying circuit identities one can show (see \cite[Fig. 2]{xanadu_linearoptics}) that the linear optical scheme is equivalent to the canonical scheme up to some CSUM-gates which act as the identity on the code space, but propagate noise from the auxiliary states to the data state. Another disadvantage of this scheme, despite its conceptual beauty and other possible practical advantages, lies in the overhead of the required costly GKP states.

It is an interesting question whether there exist similar schemes with a lower overhead, potentially introducing less noise than the canonical encoding scheme.

 \section{Knill-Glancy error correction}
 \label{sec:knill-glancy}
In the previous section we discussed one scheme allowing us to obtain the full error syndrome without using inline squeezing. In this section  we will consider another such scheme.
 This scheme is an improvement of the Knill-Glancy scheme such that all squeezing operations only act on ancilla states.
  For the square lattice qubit GKP code this improved scheme was already (independently from our work) proposed in Ref. \cite{gkp_offline}.
  
  Here we will first discuss the stabilizer formalism and measurements by discussing the error correction of one quadrature in the original Knill-Glancy scheme as an example. Then it is easy to first generalize the improved Knill-Glancy scheme to arbitrary $n$-mode GKP codes encoding qudits of arbitrary dimension $D$ (see App. \ref{app:knill_glancy_gkp}) and later we also show that we can obtain an analogous scheme in the case where we concatenate these general GKP codes with arbitrary CSS-codes (see App. \ref{app:knill_glancy_concat}).

 The stabilizers of the square qubit GKP code are $\exp\left(i2\sqrt{\pi}\hat{q}\right)$ and $\exp\left(i2\sqrt{\pi}\hat{p}\right)$. Let us first consider the correction of position shifts. Thus we have to consider a general GKP state and a GKP-$\ket{+}$ state. After the Gaussian error channel we have an (unknown) error operator $\exp(i\left(v_1\hat{q}_1+v_2\hat{q}_2-u_1\hat{p}_1-u_2\hat{p}_2\right))$. After this error the two-mode state is stabilized by the following four stabilizers:
 \begin{align*}
 \exp(-i v_1 2\sqrt{\pi})\exp\left(i2\sqrt{\pi} \hat{p}_1\right)\,,\\ \exp(-i u_1 2\sqrt{\pi})\exp\left(i2\sqrt{\pi} \hat{q}_1\right)\,,\\
 \exp(-i v_2 \sqrt{\pi})\exp\left(i\sqrt{\pi} \hat{p}_2\right) \,,\\ \exp(-i u_2 2\sqrt{\pi})\exp\left(i2\sqrt{\pi} \hat{q}_2\right)\,.
 \end{align*}
 After applying the beams splitter, we obtain the stabilizer generators:
 \begin{align*}
 \exp(-i v_12 \sqrt{\pi})\exp\left(i\sqrt{2\pi} (\hat{\tilde{p}}_1+\hat{\tilde{p}}_2)\right)\,,\\ \exp(-i u_1 2\sqrt{\pi})\exp\left(i\sqrt{2\pi} (\hat{\tilde{q}}_1+\hat{\tilde{q}}_2)\right)\,,\\
 \exp(-i v_2 \sqrt{\pi})\exp\left(i\sqrt{\frac{\pi}{2}} (\hat{\tilde{p}}_1-\hat{\tilde{p}}_2\right) \,,\\ \exp(-i u_2 2\sqrt{\pi})\exp\left(i\sqrt{2\pi} (\hat{\tilde{q}}_1-\hat{\tilde{q}}_2)\right)\,.
 \end{align*}
 As the next step we perform a position measurement of mode 2. We can then use the stabilizers to find the set of possible measurement outcomes. By multiplication we find that $\exp(-i2\sqrt{\pi}(u_1-u_2))\exp(i2\sqrt{2\pi}\hat{\tilde{q}}_2)$ is also a stabilizer and thus possible measurement values of $\tilde{q}_2$
 take the form of $\frac{u_1-u_2}{\sqrt{2}}+\frac{\sqrt{\pi}}{\sqrt{2}}z$ for $z\in \mathbb{Z}$.
 In order to obtain the stabilizers after the measurement we simply replace $\hat{\tilde{q}}_2$ by the measurement value of $\tilde{q}_2$. For the stabilizers involving $\hat{\tilde{p}}$ we simply take the smallest product of stabilizer generators such that there appears no $\hat{\tilde{p}}_2$. This is quite similar to the qubit stabilizer formalism, where one takes products of stabilizer generators such that there is only one stabilizer generator which anti-commutes with the observable.  Since we are not interested in the eigenstate after obtaining the measurement result we can discard this mode, such that we only need two stabilizer generators to specify our state. Thus the stabilizer generators are given by
 \begin{align*}
 \exp\left(-i2\sqrt{\pi}(v_1+v_2)\right)\exp\left(i2\sqrt{2\pi}\hat{\tilde{p}}_1\right)\,,\\
 (-1)^z\exp\left(-i\sqrt{\pi}(u_1+u_2)\right)\exp\left(i\sqrt{2\pi}\hat{\tilde{q}}_1\right)\,.
 \end{align*}
 It is now easy to check that after applying a squeezing operation (reducing the $q$ variances by a factor $\sqrt{2}$) and a position displacement by $\frac{\tilde{q}_2}{\sqrt{2}}-\frac{1}{2}\textrm{ mod}_{2\sqrt{\pi}}(2\sqrt{2}\tilde{q}_2)$ \footnote{This expression differs by a factor of -1 from Eq. 14 in Ref. \cite{Knill_Glancy} as we use a beam splitter with different phases.} we completed the error correction and are in a state which is stabilized by 
 \begin{align*}
 &\exp\left(-i2\sqrt{\pi}(v_1+v_2)\right)\exp\left(i2\sqrt{\pi}\hat{\tilde{p}}_1\right)\,,\\
 &\exp\left(-i2\sqrt{\pi}(u_1+\frac{1}{2}\textrm{ mod}_{2\sqrt{\pi}}(2u_2-2u_1))\right)\exp\left(i2\sqrt{\pi}\hat{\tilde{q}}_1\right)\,.
 \end{align*}
 However, this only shows that we are close to the code space of a GKP code, but we do not know if the information within the code space is disturbed. Therefore we have to check that up to small phases (corresponding to small errors remaining after the error correction) that we also have $\exp\left(i\sqrt{\pi}\hat{q}_1\right)\rightarrow \exp\left(i\sqrt{\frac{\pi}{2}}\hat{\tilde{q}}_1\right)$ which is easy to check (before applying the squeezing operation).
 However, in order to show $\exp\left(i\sqrt{\pi}\hat{p}_1\right)\rightarrow \exp\left(i\sqrt{2\pi}\hat{\tilde{p}}_1\right)$ we also need to exploit that the ancilla GKP qubit is in the $\ket{+}$ state, because otherwise we cannot have the product $\exp\left(i \sqrt{\frac{\pi}{2}}(\hat{\tilde{p}}_1-\hat{\tilde{p}}_2)\right)\exp\left(i \sqrt{\frac{\pi}{2}}(\hat{\tilde{p}}_1+\hat{\tilde{p}}_2)\right)=\exp\left(i\sqrt{2\pi}\hat{\tilde{p}}_1\right)$. When considering shift errors one simply has to check if the overall phase at the end is approximately '0' (no error) or '$\pi$' (error). Since we discarded stabilizer generators after the homodyne measurement it could be possible that we discarded too many such that we allow for too many states. However, after the measurement we only have one mode of interest, but still two independent stabilizer generators defining the code. Thus we did not discard too many stabilizers.
 
 In the improved Knill-Glancy scheme the first ancilla is still a $\ket{+}$ state, but the second ancilla is now a $\ket{0}$ state which is squeezed by a factor $\sqrt{2}$ which can already be incorporated in the state generation, while we do not use inline squeezing of the data mode (see Fig. \ref{fig:improved_knill_glancy}). For the case where we consider a concatenation with a CSS code we simply have to do the same and  replace the GKP Pauli-eigenstates by Pauli-eigenstates of the high-level code and all beam splitters and homodyne measurements are applied in a transversal manner.
 \section{Error propagation in stabilizer measurements}\label{sec:six}
 Let us consider prime qudit dimension $D$ and a high-level CSS code.
 Such a stabilizer code is also defined by $n-k$ stabilizer generators which generate the whole stabilizer group.
 Usually the syndrome of a stabilizer code is obtained by directly measuring the $n-k$ stabilizer generators.
 In order to measure the stabilizers, we couple an ancilla with the code's GKP qudits. The ancillas are finitely squeezed and therefore we need to carefully design our stabilizer measurements in such a way that a shift on one ancilla does not introduce errors in other stabilizer measurements. This has been done for the surface code in Ref \cite{PhysRevA.101.012316}. Here, we discuss whether this is possible for every CSS code and how these measurements need to be modified.

In this section we restrict ourselves to square lattice GKP codes concatenated with CSS codes.
 In order to perform stabilizer measurements of CSS codes one couples an ancilla state with the data qubits with controlled-$X$ ($\rm CX_{i,j}$) operations.
  For example, measuring the stabilizer $\prod_{i\in support} X_i$ can be realized by measuring the ancilla $a$ of $\prod_i \rm CX_{a,i}\ket{+}_a$ in the $X$ basis, while the stabilizer $\prod_{i\in support} Z_i$ can be measured by measuring $\prod_i\rm CX_{i,a}\ket{0}_a$ in the $Z$ basis.
  We implement the $\rm CX$-gate by using a CSUM-gate since we consider a square lattice GKP code.
    Notice that operators acting equally within the codespace do not necessarily act the same way outside of the codespace.
Furthermore, because ideal GKP states are unphysical, we are almost surely outside of the codespace and should therefore take these differences into account. 

When performing the $Z$-stabilizer measurements in a standard way the CSUM gates transfer momentum-shifts from the ancilla state originating from the finite squeezing to the data GKP states resulting in correlated momentum-shifts on multiple data GKP qudits. When performing the $X$-stabilizer measurements later these shifts may introduce errors in the syndrome. Especially due to the correlations these shifts can easily add up and overcome the threshold of correctable shifts as the variance of the sum of $n$ independent random variables increases linearly while the variance of $n$ times the same random variable increases quadratically.
Furthermore, due to the correlated shifts the faults of the stabilizer measurements would no longer be independent.

In Ref. \cite{PhysRevA.101.012316} the authors introduced a way of using CSUM and inverse CSUM gates exploiting the correlations of the shift errors such that they cancel in the next stabilizer measurement, and so there is no error propagation from one ancilla to another ancilla for the planar-square surface code. 

Let us now discuss this error propagation in a systematic way in an attempt to generalize the scheme from Ref. \cite{PhysRevA.101.012316} to more general quantum error-correcting codes with parameters $[n,k,d]_D$. Let us define the vector 
\begin{equation}
	\vec{\Delta}_\text{data}^T=\left(\vec{u}_d,\vec{v}_d\right)=\left(u_{d,1},\dots,u_{d,n},v_{d,1},\dots,v_{d,n}\right)
\end{equation}
 of random variables describing the shift errors ($u$ for position shifts and $v$ for momentum shifts) acting on data GKP qudits.
Similarly we can define such a vector for the ancilla GKP qudits which are used to measure the $X/Z$ stabilizer generators
\begin{align}
\vec{\Delta}_{X/Z}^T=\left(\vec{u}_{X/Z},\vec{v}_{X/Z}\right)=\left(u_{X/Z,1},\dots,u_{X/Z,l_{X/Z}},\right.\\
\left. v_{X/Z,l_{X/Z}},\dots,v_{X/Z,l_{X/Z}}\right)\,,
\end{align}
where $l_{X/Z}$ gives the number of $X$- or $Z$-type stabilizer generators. Suppose we assume that all data-GKP qudits performed their syndrome measurement before measuring the stabilizers of the higher code. This means that all $u$ and $v$ are i.i.d. Gaussian random variables with  mean 0 and variance $\sigma_{sq}^2$.

We now first perform the $X$-stabilizer measurements and due to the coupling we obtain the following error vectors
\begin{align}
	\vec{u}_d'&=\vec{u}_d+H_X^T\vec{u}_X\,,\\
	\vec{v}_d'&=\vec{v}_d\,,\\
	\vec{u}_X'&=\vec{u}_X\,,\\
	\vec{v}_X'&=\vec{v}_X-H_X \vec{v}_d\,.
\end{align}
In order to measure the $X$-stabilizer we measure the momentum quadrature of the ancillas and therefore we always obtain a faulty syndrome whenever a random variable in $\vec{v}_X'$ lies in the set of uncorrectable errors.

When we now perform the $Z$-stabilizer measurements we obtain due to the coupling the error vectors
\begin{align}
\vec{u}_d''&=\vec{u}_d'=\vec{u}_d+H_X^T\vec{u}_X\,,\\
\vec{v}_d''&=\vec{v}_d'-H_Z^T\vec{v}_Z=\vec{v}_d-H_Z^T\vec{v}_Z\,,\\
\vec{u}_Z'&=\vec{u}_Z+H_Z\vec{u}_d'=\vec{u}_Z+H_Z\vec{u}_d+H_ZH_X^T\vec{v}_X\,,\\
\vec{v}_Z'&=\vec{v}_Z\,.
\end{align}

In order to have a successful $Z$-stabilizer measurement we demand that $\vec{u}_Z'$ needs to lie in the set of correctable errors. The variance of $(\vec{u}_Z')_j$ is given by 
$\left(1+\|(H_Z)_{j,*}\|_2+\|(H_ZH_X^T)_{j,*}\|_2\right)\sigma^2_{sq}$.
 Also note that $H_Z H_X^T=0$ needs to hold in order to avoid error propagation between the GKP ancillas. However, up to now we only required that we are given a valid CSS code, which means that all stabilizer generators need to commute demanding $ H_ZH_X^T \mod D=0 $. These two conditions are equivalent to requiring that the symplectic form of any two rows of $H$ vanishes (without or with $\mod D$).
Therefore, it is useful to generalize the check matrix $H\in \mathbb{Z}_D^{ (n-k)\times 2n}$ to $\tilde{H}\in \mathbb{Z}^{(n-k)\times 2n}$, where $H\sim \tilde{H} \mod D$, $\sim$ denotes row equivalence with respect to the finite field $\mathbb{Z}_D$,  and furthermore, we need that 
 the symplectic form vanishes for any two distinct rows of $\tilde{H}$. 
 
 In a recent work \cite[Theorem 12]{invariant_stabilizer}  in the context of generalizing qubit to qudit codes, it was shown that it is always possible to find such a $\tilde{H}$. Thus, there is no error propagation anymore. However, this construction does not guarantee that the stabilizer weights remain small such that the noise actually coming from the data qudits may be amplified in the syndrome measurement.
 
 As one possible approach to reduce the stabilizer weights we can simply add rows of the matrix $\tilde{H}$ and try to minimize the stabilizer weights, which means we simply look for a different set of stabilizer generators. However, note that this approach is not feasible, because the problem is equivalent to being given a basis of a lattice and trying to find a different basis with minimal length and this is also known as the shortest basis problem on a lattice and which was shown to be NP-hard \cite{ShortestBasisProblem}.
 
 A different approach relies on fixing the stabilizer weight and trying to fulfill the symplectic condition. Here we will look at the cases $D=2$ and $D>2$ separately, because in the $D=2$ case $X$ and $Z$ are self-inverse giving us much more freedom while having the same stabilizer weight.
 
For $D>2$ it is not possible to sustain the minimal stabilizer weight from the canonical scheme and avoiding error propagation for arbitrary CSS codes, as it can be seen for the example of the $[D,D-2,2]_D$ error-detecting code with stabilizers $\prod_{j=1}^{D}X_j$ and  $\prod_{j=1}^{D}Z_j$. In order to sustain the minimal stabilizer weight, we can not modify the stabilizer generators, but their corresponding symplectic form does not vanish (without $\mod D$). However, for $D=2$ we can consider the stabilizers $X_1X_2$ and $Z_1Z_2^{-1}$ which still have minimal stabilizer weight, but their corresponding symplectic form vanishes (without $\mod D$).

For $D=2$ we can ideally fulfill the two conditions $\tilde{H}_Z\tilde{H}_X^T=0$ and $\|(\tilde{H}_{X/Z})_{j,*}\|_2= \|(H_{X/Z})_{j,*}|_2$ simultaneously. Let us now show some examples where we are able to fulfill both conditions.

As the first example let us consider the quantum parity code \cite{qpc}; this is a CSS code and the $Z$-stabilizers consist of weight 2 checks. Thus we choose $\tilde{H}_X=H_X$ and for $\tilde{H}_Z$ we use $H_Z$, but in each row we replace one of the two 1s by -1, thus the symplectic form is given by $1\times1+1\times(-1)=0$ (when it does not vanish trivially).
Also note that it is possible to define the quantum parity code for qudits.

Let us now consider 2-dimensional surface codes on lattices without boundary. If all face stabilizers have an even number of qubits in their support or if all vertex stabilizers have an even number of qubits in their support it is possible to achieve the optimal minimum. In order to do so  we will modify $\tilde{H}_{X/Z}$ for the type of stabilizers with even support (if it works for both faces and vertices we can choose) and do not change the other. Notice that face and vertex operators have either 0 or 2 common qubits in their support. As an example let us consider that our faces have even support. Instead of assigning each edge (corresponding qubit) the value 1 we assign $\pm1$ in an alternating way ('neighboring edges have different values'). Thus similar to the quantum parity code the symplectic form vanishes. Notice that this already includes many surface codes such as those with square, triangular, hexagonal tilings or even [4,5] tilings in hyperbolic geometry\cite{hyperbolic_surface_code}. 

However, also note that many surface and color codes have already been generalized from qubits to qudits by considering inverse Pauli operations\cite{qudit_surface,bombin_qudit_surface,primesquare_qudit_color,multidim_qudit_color}, implying that we can use their orientations to avoid error propagation and also obtain the optimal minimum.
\section{Comparison of different syndrome measurements}\label{sec:seven}
We have discussed two different approaches for obtaining the GKP syndrome information, namely an improvement of the Knill-Glancy scheme and an adaption of the error correction by teleportation scheme. Both schemes have the advantage of using no inline squeezing in contrast to schemes which make use of CSUM-gates, which are only implemented approximately. In general, the GKP Bell states needed for the teleportation scheme can be considered more expensive than the ancilla states for the Glancy-Knill scheme, because the former consist of a $2n$-mode entangled GKP state instead of two $n$-mode entangled states. However, for the case of even qudit dimension $D$ we have shown that it is possible to generate such a state by sending two $n$-mode entangled GKP states transversally through $n$ beam splitters. Because there are only  beam splitters and also no offline-squeezing we even get less noise than in the Knill-Glancy scheme.

For obtaining the high-level syndrome information we have considered three different schemes. Two of them (variations of the teleportation and the Knill-Glancy scheme) need no inline squeezing, but complicated ancilla states consisting of high-level encoded Bell states or (pre-squeezed) high-level Pauli eigenstates. These two schemes also have the advantage that we also obtain the GKP syndrome such that we only need to perform $2n$ measurements in order to obtain the full syndrome information. One might say that generating a high-level Bell state of a CSS code is not much more problematic than producing high-level Pauli-eigenstates because one could implement the logical CNOT via transversal CSUM gates, but there we also have the issue that we correlate or rather amplify the noise of different modes if we ignore the correlations.  However, in the third scheme (only for square GKP codes) we first use $2n$ measurements in order to correct displacements on the GKP qudits and then we perform the high-level stabilizer measurements by coupling ancilla states with the data-qudits via CSUM gates. This scheme has the advantage that the needed ancilla states are rather easy to generate, but one has various disadvantages: one needs inline squeezing operations, one has to use already $2n$ measurements in order to correct the small displacements and then additionally one has to measure the high-level stabilizers which also increases the noise of the already corrected data-qudits due to back propagation of errors  originating from the finitely squeezed ancillas.  

\section{Conclusion}\label{sec:conclusion}

In this article we have considered syndrome measurements of general GKP codes encoding qudits of dimension $D$ and their concatenation with stabilizer codes. We showed that we can obtain the full syndrome information of such an arbitrary $n$-mode code by making use of only $2n$ measurements. Furthermore, we discussed two schemes which allow us to obtain the GKP syndrome information by using either two suitable $n$-mode ancilla states or a single $2n$-mode GKP Bell state ancilla, transversal beam splitters and homodyne measurements.
 For the case of even qudit dimension $D$ we were able to show how GKP Bell states can be generated with transversal beam splitters and $n$-mode grid states. 
 
 Concerning the high-level syndrome information, we also propose two similar schemes without inline squeezing which give us the whole syndrome information with $2n$ homodyne measurements employing an ancilla state. We believe that not only for the Knill and Steane schemes as explicitly presented in this work, but for all fault-tolerant error correction schemes where the data modes are coupled by transversal CNOTs with an ancilla state (e.g. Shor states \cite[Sec. 4]{shorstate}) in order to perform the syndrome measurements of the higher code, one can additionally obtain the GKP syndrome information of all involved GKP codes.
  Moreover, we discussed error propagation in usual stabilizer measurements and also show that linear optical transformations leave the code distance of GKP codes and more generally error-correcting properties of codes against isotropic displacement noise invariant.
 We further analyzed the possibility of generating high level codewords by rectangular single-mode grid states and linear optics.
Besides this, we proposed an approach to calculate the logical error rates of a concatenation of a GKP code with a stabilizer code making use of the analog syndrome information where we calculate integrals instead of performing Monte-Carlo simulations.
Our main results can be summarized as follows:

\begin{itemize}
	\item GKP higher code syndrome detection: we proposed a minimal stabilizer set to be measured to obtain the full syndrome information,
	\item for logical qubits as well as qudits with non-prime dimensions the minimal measurement set is directly obtainable through Knill's error correction by teleportation on the higher level using higher GKP Bell states;	this directly provides an operational interpretation leading to	a possible implementation with transversal GKP qubit teleportations using beam splitters,
	\item for general logical qudits the minimal set can be derived via lattice theory,
	\item in a 2nd scheme, different from Knill's, we achieved the same	for higher code syndrome detections, generalizing known results for only the lower GKP level, still avoiding inline squeezing,
	\item  GKP higher code state generation: given higher $n$-mode GKP codes ($k<n$ qudits), we showed that the corresponding higher GKP Bell states cannot be obtained via transversal beam splitters; for arbitrary passive linear optics, it remains open,
	\item GKP higher code state generation: given copies of arbitrary rectangular single-mode grid states, we have shown that the codewords of the higher GKP codes can generally not be obtained via passive linear optics,
	\item GKP qudit Bell state generation: generalizing a known result for GKP qubits, we showed that for even qudit dimension the Bell states can be created from a number of suitable input grid states via transversal beam splitters  (this result includes states with $k=n$ qudits encoded into $n$ modes); whether this is also possible for odd qudit dimensions remains open.
\end{itemize}

\section*{Acknowledgment}
  We thank Daniel Miller for useful discussions about the generation of the GKP Bell states. We thank the BMBF in Germany for support via
  Q.Link.X/QR.X and the BMBF/EU for support via QuantERA/ShoQC.

\section*{Note}

At the final preparation stage of this work, Ref. \cite{conrad2021gottesmankitaevpreskill} was posted. Similar to our treatment that work also addresses the issue of a minimal stabilizer basis in higher GKP codes.
While there is also some overlap in terms of the methods used,
overall the two works are complementary,
where our work has a particular focus on linear-optical
realizations of the error correction schemes.

 \appendix
 \section{Minimal set of stabilizer generators}
 \label{app:minimal set}
 \begin{theorem}
 	For any GKP code ($n$ modes, arbitrary qudit dimension $D$) concatenated with an arbitrary stabilizer code it is possible to obtain the full syndrome information with $2n$ measurements.
 \end{theorem}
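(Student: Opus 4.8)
The plan is to use the observation, stressed already in the main text, that the concatenation of an $n$-mode GKP code with an $[n,k,d]_D$ stabilizer code is again a GKP code living on the same $n$ modes, and is therefore described by a single lattice $\mathcal{L}\subset\mathbb{R}^{2n}$ in the symplectic (phase-space) representation. First I would make this lattice explicit. The generators of the combined stabilizer group are the $2n$ single-qudit GKP stabilizers together with the stabilizer generators of the high-level code, the latter lifted to Weyl-Heisenberg displacements through the embedding of the logical Paulis $\overline{X}_i,\overline{Z}_i$. Mapping each generator to its vector in $\mathbb{R}^{2n}$ and taking all integer combinations produces $\mathcal{L}$; concretely the GKP generators land at scale $\sqrt{D}$ and the lifted high-level generators at scale $1/\sqrt{D}$, so all vectors lie in $\tfrac{1}{\sqrt{D}}\mathbb{Z}^{2n}$ and $\mathcal{L}$ is genuinely discrete. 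The $\pm1$ phase ambiguity of the footnote is irrelevant here, since it does not alter the vectors.

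The key step is to show that $\mathcal{L}$ has full rank $2n$. This is immediate from the GKP part alone: the stabilizers $\exp(i\sqrt{2\pi D}\,\hat q_i)$ and $\exp(-i\sqrt{2\pi D}\,\hat p_i)$ map to $2n$ vectors proportional to the standard basis of $\mathbb{R}^{2n}$, so they already span a full-rank sublattice, and adjoining the high-level generators only refines $\mathcal{L}$ without lowering its rank. Equivalently, one can argue from finiteness: if $\mathcal{L}$ spanned a proper subspace $V$, then its symplectic complement $V^{\omega}$ would be a positive-dimensional continuous subspace contained in $\mathcal{L}^{\perp}$, making $\mathcal{L}^{\perp}/\mathcal{L}$ infinite and contradicting that the code encodes a finite-dimensional logical space. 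A full-rank lattice in $\mathbb{R}^{2n}$ admits a basis of exactly $2n$ vectors $b_1,\dots,b_{2n}$, and these correspond to $2n$ stabilizer generators that generate the entire stabilizer group.

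It then remains to verify that measuring these $2n$ generators delivers the complete syndrome. The stabilizer measurement associated with $b_j$ returns the symplectic pairing $\omega(b_j,\xi)\bmod 1$ of the unknown displacement $\xi$; since the $b_j$ form a basis of $\mathcal{L}$, this is equivalent to determining the coset $\xi+\mathcal{L}^{\perp}$, because $\xi$ and $\xi'$ give identical pairings exactly when $\xi-\xi'\in\mathcal{L}^{\perp}$. Decoding the concatenated code, i.e.\ choosing both the small GKP correction and the high-level logical correction, amounts to selecting the most likely error consistent with that coset, so the $2n$ numbers already constitute the full syndrome and no further measurements are needed. In particular the up-to-$2n$ high-level Pauli stabilizers arising for non-prime $D$ are themselves elements of $\mathcal{L}$, hence integer combinations of the $b_j$, and are read off from the same data.

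I expect the main obstacle to be precisely the bookkeeping that makes this lattice route preferable to the group-counting argument used for qubits: for non-prime $D$ the high-level code may genuinely need up to $2n$ stabilizer generators, so a naive count of independent generators in the group picture is delicate. Working with the real lattice $\mathcal{L}$ sidesteps this, because \emph{full rank $\Rightarrow$ $2n$ basis vectors} holds unconditionally in $D$. The one point requiring care is checking that the GKP stabilizers and the lifted high-level stabilizers really close up into a single well-defined lattice, using that each high-level generator raised to the $D$-th power is a GKP stabilizer (e.g.\ $\overline{X}_i^{D}=\exp(-i\sqrt{2\pi D}\,\hat p_i)$); this is exactly the redundancy between the two stabilizer families, and the lattice formulation absorbs it automatically.
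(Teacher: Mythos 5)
Your proof takes essentially the same route as the paper: the symplectic representation of the concatenated stabilizer group is a discrete, full-rank subgroup of $\mathbb{R}^{2n}$, hence a lattice admitting a basis of $2n$ vectors, and measuring the $2n$ corresponding generators yields the complete syndrome. The only caveat is that your discreteness and full-rank steps are written for the square-lattice GKP base code (vectors in $\tfrac{1}{\sqrt{D}}\mathbb{Z}^{2n}$, generators proportional to the standard basis), whereas the theorem covers arbitrary GKP lattices; the paper closes this by observing $\mathcal{L}\subseteq\Lambda\subseteq\mathcal{L}^{\perp}$, so that $\Lambda$ inherits discreteness from the dual lattice $\mathcal{L}^{\perp}$ and full rank from $\mathcal{L}$, which by definition has $2n$ independent generators for any $n$-mode GKP code.
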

 
 \begin{proof}
 	It is well known that the phase space representation of the stabilizers of a GKP code forms a lattice $\mathcal{L} \subset \mathbb{R}^{2n}$. Similarly, the phase space representation of the set of operators commuting with the stabilizers $\mathcal{L}^\bot \subset \mathbb{R}^{2n}$ also forms a lattice \cite[Sec. VI]{gkp}. We can show that the phase space representation $\Lambda$ of the stabilizers of a GKP code concatenated with a higher-level stabilizer code also forms a lattice. For this we have to show that $\Lambda$ is a discrete, linear subgroup of $\mathbb{R}^{2n}$ and we will use the relation $\mathcal{L}\subseteq\Lambda\subset\mathcal{L}^\bot$ (the last relation holds because all stabilizers have to commute). Since we can obtain $\Lambda$ by adding additional points to $\mathcal{L}$ in a linear way, it is easy to see, that $\Lambda$ forms a linear subset of $\mathbb{R}^{2n}$. Since $\Lambda$ is a subset of $\mathcal{L}^\bot$ which is discrete (since it is a lattice), meaning that there exists an $\epsilon>0$ such that there is always at most one lattice point in an $\epsilon$ neighborhood, it is clear that $\Lambda$ is also discrete and therefore also forms a lattice. Every lattice has a basis \cite[Theorem 8]{latticenotes} and therefore we only have to measure the $2n$ operators corresponding to the lattice basis elements.
 \end{proof}

 \section{Linear optical decomposition of Bell states}
 \label{app:bell_decomp}
 
 Here we show that it is possible for arbitrary GKP codes with even qudit dimension $D$ to generate Bell states by mixing two GKP-like states at $n$ beam splitters transversally. Let us choose a fixed arbitrary GKP code (encoding $k=n$ qudits in $n$ modes) and let us write the logical Pauli operators as $\overline{X}_j=\exp(i \hat{\overline{x}}_j)$ ($j\in\{1,\dots,n\}$) implicitly defining $\hat{\overline{x}}_j$ and we do the same for $\hat{\overline{z}}_j$ with $\overline{Z}_j = \exp(i\hat{\bar{z}}_j)$  .
 
 In the next step the first index will number the logical operators of a GKP code and the second one will number the two codes. We start with the product state stabilized by the $4n$ stabilizers ($j$ takes every value in $\{1\dots n\}$)
 \begin{align*}
 \{&\exp(i \frac{D}{\sqrt{2}} \hat{\overline{z}}_{j,1}),\exp(i \sqrt{2}\hat{\overline{x}}_{j,1}),\\ &\exp(i \sqrt{2}\hat{\overline{z}}_{j,2}),\exp(i \frac{D}{\sqrt{2}}\hat{\overline{x}}_{j,2})\}\,.
 \end{align*}
For the special case of $n=1$ and $D=2$ we have the four stabilizers of the product state of two GKP 'qunaught' states (each representing a one-dimensional GKP space and hence a state with equal lattice spacing along $x$ and $p$, $\sqrt{2\pi}$). 

 After applying a 50:50 beam splitter transversally upon every pair of code states 1 and 2 for every $j$, we obtain: 
 \begin{align*}
 \{&\exp(i \frac{D}{2}\left(\hat{\overline{z}}_{j,1}+\hat{\overline{z}}_{j,2}\right)),\exp(i \left(\hat{\overline{x}}_{j,1}+\hat{\overline{x}}_{j,2}\right))\,,\\
 &\exp(i\left(\hat{\overline{z}}_{j,1}-\hat{\overline{z}}_{j,2}\right)),\exp(i \frac{D}{2}\left(\hat{\overline{x}}_{j,1}-\hat{\overline{x}}_{j,2}\right))
 \}\,.
 \end{align*}
 After a suitable multiplication (strictly assuming even $D$ to make sure an integer number of multiplications) of the stabilizers as discussed in the main text, we get
 \begin{align*}
 \{\exp(i D \hat{\overline{z}}_{j,1}), \exp(i\left(\hat{\overline{x}}_{j,1}+\hat{\overline{x}}_{j,2}\right)),\\
 \exp(i\left(\hat{\overline{z}}_{j,1}-\hat{\overline{z}}_{j,2}\right)), \exp(i D \hat{\overline{x}}_{j,1})	
 \}\,,
 \end{align*}
 where it is obvious that this set stabilizes GKP Bell states as this set contains $\overline{X}_1\overline{X}_2$ and $\overline{Z}_1\overline{Z}^{-1}_1$ which are the stabilizers of a Bell state and furthermore we have two independent stabilizer generators from the original GKP code. For the cases with odd $D$ we do not know whether GKP Bell states can be built from two $n$-mode code states with linear-optics. 
 
 When we consider a code encoding $k<n$ qudits in $n$ modes, unfortunately it is impossible to generate logical Bell states by coupling two product states by simple transversal beam splitters.
 In this case, the code space is defined by $4n$ independent stabilizer generators and $4k$ of them are proportional to logical Pauli operators. For these stabilizer generators we already know how the input stabilizers should look like. Thus, we only need to know how the remaining input stabilizers should look like. In order to obtain these we first consider the desired stabilizer generators and transform them by the inverse beam splitters (our beam splitters are self-inverse). Also notice that these stabilizer generators are independent (linear independent in the symplectic representation) and thus we only need to consider a pair of equivalent stabilizers of both codes:
 \begin{align*}
 	\{\exp(i \hat{g}_1),\exp(i (\hat{g}_1+\hat{g}_2))\}\\\rightarrow\{\exp(\frac{i}{\sqrt{2}}(\hat{g}_1+\hat{g}_2)),\exp(i\sqrt{2}\hat{g}_1)\}\,.
 \end{align*}
 It is obvious then that it is impossible to multiply the first stabilizer with the second one in such a way that the first stabilizer only acts on the modes belonging to code 2.

  \section{Knill error correction for qudits}\label{app:knill}
 Here we generalize the error correction by teleportation scheme proposed by Knill \cite{Knill_teleportation} from qubits to qudits. Although this scheme works for arbitrary qubit stabilizer codes, we have to restrict ourselves to CSS codes for the generalization to qudits, because the Pauli operators are not self-inverse anymore. 
 
 The projection operator onto the codespace with syndrome $s$ is given by 
 \begin{equation}
 \Pi(Q,e)=\prod_l\left(\sum_{j=0}^{D-1}(\exp(i\omega e_l )\hat{g}_l)^j\right)\,,
 \end{equation}
 where $\hat{g}_l$ is the $l$th stabilizer generator of the code represented by the matrix Q. 
 \begin{align}
 \Pi_2(Q,0) \ket{\Phi^+}^{\otimes n}_{12}\\
 =\Pi_2(Q,0) \Pi_2(Q,0)\ket{\Phi^+}^{\otimes n}_{12}\\
 = \Pi_2(Q,0) \Pi_1(\tilde{Q},0)\ket{\Phi^+}^{\otimes n}_{12}
 \end{align}
 
 In the first step, we wrote down the state which is needed to follow Knill's proof. We then try to simplify this expression. In the second line we used the idempotence of projection operators. In the next step we used that qudit Bell states are stabilized by the $X_1X_2$ and $Z_1Z_2^{-1}$. Therefore the projection onto the code represented by the matrix $Q$ with syndrome 0 on the second $n$ qudits is equivalent to a projection onto the code represented by $\tilde{Q}$  with syndrome 0 on the first $n$ qudits. Here $\tilde{Q}$ is given via $Q$ where all entries corresponding to $X$-operators are multiplied by -1. If $Q$ is a CSS code then this means that some rows have to be multiplied by -1 and their syndrome should yield 0. One can multiply these rows again by -1 to obtain $Q$, but the syndrome does not change. This can also be understood in the following way: all $X$-type operators in the stabilizer generators have been inverted. Thus for CSS codes the stabilizer group remains invariant. However, if $Q$ does not represent a CSS code it may describe a different code from $\tilde{Q}$. 
We checked it for the five-qudit (with stabilizers generators $X\otimes Z\otimes Z^{-1}\otimes X^{-1}\otimes \mathds{1}$ and cyclic permutations thereof) code that the stabilizer group generated by $Q$ does not equal the group generated by $\tilde{Q}$ for $D>2$ in general.
 
  The remaining proof is completely analogous to Knill's proof where he changes the order of the conditional Pauli operations and the projection operator, resulting in a changed syndrome and using the fact that the quantum teleportation protocol implements the identity.
 
 \section{Linear optical Knill-Glancy scheme for general GKP codes}
 \label{app:knill_glancy_gkp} 
 Let us consider an $n$ mode GKP code which encodes qudits of dimension $D$, but now without concatenation with a stabilizer code. Let us consider normalized quadrature operators $\hat{u}_j$ ($j\in \{1,\dots,n\}$) generating $X_j$ and normalized quadrature operators $\hat{v}_j$ generating $Z_j$. Thus we know that only $[\hat{u}_k,\hat{v}_k]\neq0$ and all other commutators vanish. Furthermore for a quadrature operator $\hat{s}$ there exists a symplectic representation as a $2n$-dimensional vector. We will refer to this symplectic representation as well as a measurement result of $\hat{s}$ as s, but it should always be clear from the context what the meaning is in each case. The quantity $\omega(\cdot,\cdot)$ denotes the canonical symplectic form.\\
 The stabilizers are then given by $X_j^D$ and $Z_j^D$ ($j\in\{1,\dots,n\}$) with
 \begin{align}
 	X_j=\exp\left(i \hat{u}_j\frac{1}{\sqrt{D\omega(u_j,v_j)}}\right)\,,\\
 		Z_j=\exp\left(i \hat{v}_j\frac{1}{ \sqrt{D\omega(u_j,v_j)}}\right)\,.
 \end{align}
 Without loss of generality we have assumed that $\omega(u_j,v_j)>0$ (the square GKP code is obtained with $\hat{u}_j=-\hat{p}_j$ and $\hat{u}_j=\hat{q}_j$)and $c_j\in \mathbb{R}_+$.
 In order to consider shift errors in the stabilizer formalism we use the identity
 \begin{align}
 	e^{i\hat{a}}e^{i\hat{b}}e^{-i\hat{a}}=e^{i \hat{b}}e^{-i 2\pi \omega(a,b)}\,.
 \end{align}
 
  Let us now briefly discuss how the stabilizers of a GKP code transform under shift errors $e^{i\hat{a}}$:
 \begin{align}
 \ket{\psi}=e^{i\hat{b}}\ket{\psi}\,,\\
 \ket{\tilde{\psi}}:=e^{i\hat{a}}\ket{\psi}=	e^{i\hat{a}}e^{i\hat{b}}\ket{\psi}=	e^{i\hat{a}}e^{i\hat{b}}e^{-i\hat{a}}e^{i\hat{a}}\ket{\psi}\\
 =e^{i(\hat{b}-2\pi\omega(a,b))}\ket{\tilde{\psi}}\,.
 \end{align}
 
 We will now show that we can apply the linear optical Knill-Glancy scheme to general GKP codes.
 In the first stage the $n$ data modes and the first $n$ ancilla modes are given by the following stabilizers assuming displacement errors with symplectic representation $e_1$ and $e_2$, subscripts 1 and 2 refer to the data and half of the ancilla modes, respectively,

 \begin{align*}
 	\exp\left(i\left(\hat{u}_{j,1}-2\pi\omega(e_1,u_j)\right)\sqrt{\frac{ D}{\omega(u_j,v_j)}}\right)\,,\\
 	\exp\left(i\left(\hat{v}_{j,1}-2\pi\omega(e_1,v_j)\right)\sqrt{\frac{D}{\omega(u_j,v_j)}}\right)\,,\\
 	\exp\left(i\left(\hat{u}_{j,2}-2\pi\omega(e_2,u_j)\right)\sqrt{\frac{1}{D \omega(u_j,v_j)}}\right)\,,\\
 	\exp\left(i\left(\hat{v}_{j,2}-2\pi\omega(e_2,v_j)\right)\sqrt{\frac{D}{\omega(u_j,v_j)}}\right)\,.
 \end{align*} 
 After applying the 50:50 beam splitters we obtain the following stabilizers:
 \begin{align*}
 \exp\left(i\left(\frac{\hat{\tilde{u}}_{j,1}+\hat{\tilde{u}}_{j,2}}{\sqrt{2}}-2\pi\omega(e_1,u_j)\right)\sqrt{\frac{D}{\omega(u_j,v_j)}}\right)\,,\\
 \exp\left(i\left(\frac{\hat{\tilde{v}}_{j,1}+\hat{\tilde{v}}_{j,2}}{\sqrt{2}}-2\pi\omega(e_1,v_j)\right)\sqrt{\frac{D}{\omega(u_j,v_j)}}\right)\,,\\
 \exp\left(i\left(\frac{\hat{\tilde{u}}_{j,1}-\hat{\tilde{u}}_{j,2}}{\sqrt{2}}-2\pi\omega(e_2,u_j)\right)\sqrt{\frac{1}{D \omega(u_j,v_j)}}\right)\,,\\
 \exp\left(i\left(\frac{\hat{\tilde{v}}_{j,1}-\hat{\tilde{v}}_{j,2}}{\sqrt{2}}-2\pi\omega(e_2,v_j)\right)\sqrt{\frac{D}{\omega(u_j,v_j)}}\right)\,.
 \end{align*}
 In the next step we perform measurements of $\hat{\tilde{v}}_{j,2}$ and the measurement outcomes $\tilde{v}_{j,2}$ give us partial information about $\omega(e_1-e_2,v_j)$ as it can be seen by the stabilizers (before the measurement)
 \begin{align*}
 	\exp\left(i\left(\sqrt{2}\hat{\tilde{v}}_{j,2}-2\pi\omega(e_1-e_2,v_j)\right)\sqrt{\frac{D}{\omega(u_j,v_j)}}\right)\,.
 \end{align*}
 After the measurement the stabilizers of the data qudits are given by 
 \begin{align*}
 \exp\left(i\left(\sqrt{2}\hat{\tilde{u}}_{j_1}-2\pi\omega(e_1+e_2,u_j)\right)\sqrt{\frac{D}{\omega(u_j,v_j)}}\right)\,,	\\
 \exp\left(i\left(\frac{\hat{\tilde{v}}_{j,1}+\tilde{v}_{j,2}}{\sqrt{2}}-2\pi\omega(e_1,v_j)\right)\sqrt{\frac{D}{\omega(u_j,v_j)}}\right)\,.
 \end{align*}
 We then apply a shift $\exp\left(i \hat{\tilde{u}}_{j,1}\frac{\tilde{v}_{j,2}}{2\pi\omega(u_j,v_j)}\right)$. The stabilizers in the second phase of the scheme are
 \begin{align*}
 	\exp\left(i\left(\sqrt{2}\hat{\tilde{u}}_{j,1}-2\pi\omega(e_1+e_2,u_j)\right)\sqrt{\frac{D}{\omega(u_j,v_j)}}\right)\,,\\
 	 	\exp\left(i\left(\frac{\hat{\tilde{v}}_{j,1}}{\sqrt{2}}-2\pi\omega(e_1,v_j)\right)\sqrt{\frac{D}{\omega(u_j,v_j)}}\right)\,,\\
 	 	\exp\left(i\left(\sqrt{2}\hat{\tilde{u}}_{j,3}-\sqrt{2}2\pi\omega(e_3,u_j)\right)\sqrt{\frac{D}{\omega(u_j,v_j)}}\right)\,,\\
 	 	\exp\left(i\left(\frac{\hat{\tilde{v}}_{j,1}}{\sqrt{2}}-\frac{2\pi\omega(e_3,v_j)}{\sqrt{2}}\right)\sqrt{\frac{1}{D\omega(u_j,v_j)}}\right) \,.	
 \end{align*}
 After applying the beam splitter we obtain 
 \begin{align*}
 \exp\left(i\left(\hat{\tilde{\tilde{u}}}_{j,1}+\hat{\tilde{\tilde{u}}}_{j,3}-2\pi\omega(e_1+e_2,u_j)\right)\sqrt{\frac{D}{\omega(u_j,v_j)}}\right)\,,\\
 \exp\left(i\left(\frac{\hat{\tilde{\tilde{v}}}_{j,1}+\hat{\tilde{\tilde{v}}}_{j,3}}{2}-2\pi\omega(e_1,v_j)\right)\sqrt{\frac{D}{\omega(u_j,v_j)}}\right)\,,\\
 \exp\left(i\left(\hat{\tilde{\tilde{u}}}_{j,1}-\hat{\tilde{\tilde{u}}}_{j,3}-\sqrt{2}2\pi\omega(e_3,u_j)\right)\sqrt{\frac{D}{\omega(u_j,v_j)}}\right)\,,\\
 \exp\left(i\left(\frac{\hat{\tilde{\tilde{v}}}_{j,1}-\hat{\tilde{\tilde{v}}}_{j,3}}{2}-\frac{2\pi\omega(e_3,v_j)}{\sqrt{2}}\right)\sqrt{\frac{1}{D\omega(u_j,v_j)}}\right)\,. 	
 \end{align*}
 We then measure the operators $\hat{\tilde{\tilde{u}}}_{j,3}$ which is again constrained by a stabilizer and this gives us partial information about $\omega(e_1+e_2-\sqrt{2}e_3,u_j)$.
 Thus, after the measurement the GKP code is stabilized by
 \begin{align*}
 	\exp\left(i\left(\hat{\tilde{\tilde{v}}}_{j,1}-2\pi\omega(e_1+\frac{e_3}{\sqrt{2}},v_j)\right)\sqrt{\frac{D}{\omega(u_j,v_j)}}\right)\,,\\
 	 	\exp\left(i\left(\hat{\tilde{\tilde{u}}}_{j,1}+\tilde{\tilde{u}}_{j,3}-2\pi\omega(e_1+e_2,u_j)\right)\sqrt{\frac{D}{\omega(u_j,v_j)}}\right)\,.
 \end{align*}
 Similarly as before we apply a shift $\exp\left(-i\hat{\tilde{\tilde{v}}}_{j,1}\frac{\tilde{\tilde{u}}_{j,3}}{2\pi\omega(u_j,v_j)}\right)$ in order to obtain the stabilizer 
 \begin{align*}
 	\exp\left(i\left(\hat{\tilde{\tilde{u}}}_{j,1}-2\pi\omega(e_1+e_2,u_j)\right)\sqrt{\frac{D}{\omega(u_j,v_j)}}\right)\,.
 \end{align*}
 A similar calculation can be done for the logical operators $\overline{X}$ and $\overline{Z}$. When doing this for $\overline{X}$ one can see that the logical operator transforms as
 \begin{align}
 	\exp\left(i\left(\hat{u}_{j,1}-2\pi\omega(e_1,u_j)\right) \sqrt{\frac{1}{D \omega(u_j,v_j)}}\right)\nonumber\\
 	\rightarrow\exp\left(i\left(\hat{\tilde{\tilde{u}}}_{j,1}-2\pi\omega(e_1+e_2,u_j)\right) \sqrt{\frac{1}{D \omega(u_j,v_j)}}\right)
 \end{align} which means we need to know $\omega(e_1+e_2,u_j)$ in order to perform the error correction, but we only know $\omega(e_1+e_2-\sqrt{2}e_3,u_j)\mod 2\pi\sqrt{\frac{\omega(u_j,v_j)}{D}}$ from our measurement results.

 Up to small displacements originating from the noise on the ancilla states, we now have the same state as before the error correction, but we can use our measurement results for a maximum likelihood estimation (which might also consider correlations between the measurement results) of $\omega(e_1,v_j)$ and $\omega(e_1+e_2,u_j)$ and apply correction shifts accordingly. Since we never use the periodicity of the exponential it is straightforward to see that a similar calculation also holds if one assumes that the data qudits are stabilized by either $X_j$ or $Z_j$. Thus logical errors can only occur if the maximum likelihood estimation fails. 
  \section{Linear optical Knill-Glancy scheme for concatenated CSS codes}  \label{app:knill_glancy_concat}
  
  Here we show that it is possible to obtain the full syndrome information in a scheme similar to the one described in the previous section. We only have to consider (squeezed) logical Pauli eigenstates of the high-level code instead of the GKP code.
  Since we consider a concatenation of a GKP code and a high-level code, we also have the GKP code stabilizers and additional ones from the high-level code. Thus, we obtain the syndrome information of the GKP code completely analogous as in the proof in the previous section and we only need to prove that we are able to obtain the syndrome information of the high-level code. However, notice that our new stabilizer set does not contain GKP Pauli operators, which were needed in order to ensure that the information encoded in the GKP code is not corrupted. This looks like a big problem, but actually we do not care whether the information in single GKP codes is corrupted. We only want that the information encoded in the concatenation of the GKP and the high-level code remains unchanged.
  This is achieved by having (squeezed) logical Pauli operators of the high-level code instead of those for the low-level GKP codes in the stabilizer group. 
   
   Let us now prove that we are able to obtain the syndrome information of the high-level code. The stablizers corresponding to the high-level code are given by (subscript $l$ numbers independent stabilizer generators of the high-level qudit code)
   \begin{align*}
   	\exp\left(i\sum_{j=1}^n\left(\hat{u}_{j,1}-2\pi\omega(e_1,u_j)\right)H_{jl}^{\hat{u}}\sqrt{\frac{1}{D\omega(u_j,v_j)}}\right)\,,\\
   	\exp\left(i\sum_{j=1}^n\left(\hat{v}_{j,1}-2\pi\omega(e_1,u_j)\right)H_{jl}^{\hat{v}}\sqrt{\frac{1}{D\omega(u_j,v_j)}}\right)\,,\\
   	\exp\left(i\sum_{j=1}^n\left(\hat{u}_{j,2}-2\pi\omega(e_2,u_j)\right)H_{jl}^{\hat{u}}\sqrt{\frac{1}{D\omega(u_j,v_j)}}\right)\,,\\
   	\exp\left(i\sum_{j=1}^n\left(\hat{v}_{j,2}-2\pi\omega(e_2,u_j)\right)H_{jl}^{\hat{v}}\sqrt{\frac{1}{D\omega(u_j,v_j)}}\right)\,.
   \end{align*}
   
   After applying the 50:50 beam splitter we obtain
   \begin{widetext}
   	   \begin{align*}
   		\exp\left(i\sum_{j=1}^n\left(\frac{\hat{\tilde{u}}_{j,1}+\hat{\tilde{u}}_{j,2}}{\sqrt{2}}-2\pi\omega(e_1,u_j)\right)H_{jl}^{\hat{u}}\sqrt{\frac{1}{D\omega(u_j,v_j)}}\right)\,,\\
   		\exp\left(i\sum_{j=1}^n\left(\frac{\hat{\tilde{v}}_{j,1}+\hat{\tilde{v}}_{j,2}}{\sqrt{2}}-2\pi\omega(e_1,u_j)\right)H_{jl}^{\hat{v}}\sqrt{\frac{1}{D\omega(u_j,v_j)}}\right)\,,\\
   		\exp\left(i\sum_{j=1}^n\left(\frac{\hat{\tilde{u}}_{j,1}-\hat{\tilde{u}}_{j,2}}{\sqrt{2}}-2\pi\omega(e_2,u_j)\right)H_{jl}^{\hat{u}}\sqrt{\frac{1}{D\omega(u_j,v_j)}}\right)\,,\\
   		\exp\left(i\sum_{j=1}^n\left(\frac{\hat{\tilde{u}}_{j,2}-\hat{\tilde{v}}_{j,2}}{\sqrt{2}}-2\pi\omega(e_2,u_j)\right)H_{jl}^{\hat{v}}\sqrt{\frac{1}{D\omega(u_j,v_j)}}\right)\,.
   	\end{align*}
   \end{widetext}

   We then measure $\hat{\tilde{v}}_{j,2}$ which is constrained by stabilizer conditions
   \begin{widetext}
	   \begin{align*}
		\exp\left(i\sum_{j=1}^n\left(\sqrt{2}\hat{\tilde{v}}_{j,2}-2\pi\omega(e_1-e_2,u_j)\right)H_{jl}^{\hat{v}}\sqrt{\frac{1}{D\omega(u_j,v_j)}}\right)\,,
	\end{align*}
   \end{widetext}

 giving us partial information about the displacement errors. If we perform an ideal formal stabilizer measurement we would learn the stabilizer
 \begin{align*}
 	\exp\left(i\sum_{j=1}^n\left(\hat{\tilde{v}}_{j,2}-2\pi\omega(e_1,u_j)\right)H_{jl}^{\hat{v}}\sqrt{\frac{1}{D\omega(u_j,v_j)}}\right)\,.
 \end{align*} Thus, up to a bit of noise originating from the noisy ancilla and a rescaling by a factor of $\sqrt{2}$, both approaches give the same information about the displacement errors. The state after the measurement, considering the new ancilla, is then given by
\begin{widetext}
\begin{align*}
	\exp\left(i\sum_{j=1}^n \left(\sqrt{2}\hat{\tilde{u}}_{j,1}-2\pi\omega(e_1+e_2,u_j)\right)H_{jl}^{\hat{u}}\sqrt{\frac{1}{D\omega(u_j,v_j)}}\right)\,,\\
	\exp\left(i\sum_{j=1}^n \left(\frac{\hat{\tilde{v}}_{j,1}+\tilde{v}_{j,2}}{\sqrt{2}}-2\pi\omega(e_1,v_j)\right)H_{jl}^{\hat{v}}\sqrt{\frac{1}{D\omega(u_j,v_j)}}\right)\,,\\
	\exp\left(i\sum_{j=1}^n \left(\sqrt{2}\hat{\tilde{u}}_{j,3}-\sqrt{2}2\pi\omega(e_3,u_j)\right)H_{jl}^{\hat{u}}\sqrt{\frac{1}{D\omega(u_j,v_j)}}\right)\,,\\
	\exp\left(i\sum_{j=1}^n \left(\frac{\hat{\tilde{v}}_{j,3}}{\sqrt{2}}-\frac{1}{\sqrt{2}}2\pi\omega(e_3,v_j)\right)H_{jl}^{\hat{v}}\sqrt{\frac{1}{D\omega(u_j,v_j)}}\right)\,.
\end{align*}
\end{widetext}
 
 By applying a corresponding displacement shift as in the previous section, we can remove the phase depending on $\tilde{v}_{j,2}$. After applying the second beam splitter we obtain
 \begin{widetext}
 \begin{align*}
 	\exp\left(i\sum_{j=1}^n \left(\hat{\tilde{\tilde{u}}}_{j,1}+\hat{\tilde{\tilde{u}}}_{j,3}-2\pi\omega(e_1+e_2,u_j)\right)H_{jl}^{\hat{u}}\sqrt{\frac{1}{D\omega(u_j,v_j)}}\right)\,,\\
 	\exp\left(i\sum_{j=1}^n \left(\frac{\hat{\tilde{\tilde{v}}}_{j,1}+\hat{\tilde{\tilde{v}}}_{j,3}}{2}-2\pi\omega(e_1,v_j)\right)H_{jl}^{\hat{v}}\sqrt{\frac{1}{D\omega(u_j,v_j)}}\right)\,,\\
 	\exp\left(i\sum_{j=1}^n \left(\hat{\tilde{\tilde{u}}}_{j,1}-\hat{\tilde{\tilde{u}}}_{j,3}-\sqrt{2}2\pi\omega(e_3,u_j)\right)H_{jl}^{\hat{u}}\sqrt{\frac{1}{D\omega(u_j,v_j)}}\right)\,,\\
 	\exp\left(i\sum_{j=1}^n \left(\frac{\hat{\tilde{\tilde{v}}}_{j,1}-\hat{\tilde{\tilde{v}}}_{j,3}}{2}-\frac{1}{\sqrt{2}}2\pi\omega(e_3,v_j)\right)H_{jl}^{\hat{v}}\sqrt{\frac{1}{D\omega(u_j,v_j)}}\right)\,.
 \end{align*}
 \end{widetext}
 
 We then measure $\hat{\tilde{\tilde{u}}}_{j,3}$ where we obtain partial information about the displacement errors due to the stabilizer constraint
 \begin{widetext}
  \begin{align*}
 	\exp\left(i\sum_{j=1}^n\left(2 \hat{\tilde{\tilde{u}}}_{j,2}-2\pi\omega(e_1+e_2-\sqrt{2}e_3,u_j)\right)H_{j,l}^{\hat{u}}\sqrt{\frac{1}{D \omega(u_j,v_j)}}\right)\,.
 \end{align*}
 \end{widetext}

 After the measurement the states are approximately (because of the small displacements on the ancillas) back to the code space and we have obtained the full syndrome information.
 The steps involving the logical Pauli operators showing that the logical information is not corrupted work completely analogous as in main text where we discuss the original Knill-Glancy scheme. 
 \section{Linear optics preserves code distance}
 \label{app:codedistance_preservation}
 (Passive) linear optical operations acting on $n$ modes are described by elements of the unitary group $U(n)$ acting on the mode operators. Using the 2-out-of-3 property \cite[p. 44]{symplectic_topology} of unitaries we see that $U(n)\cong O(2n)\cap Sp(2n)$. Therefore, the linear optical operation is represented by an orthogonal and symplectic matrix in the $2n$-dimensional phase space. 
 
 Let us consider a lattice $S\subset\mathbb{R}^{2n}$ where the symplectic form between any two lattice points yields an integer representing the commutation condition of stabilizer groups in the symplectic representation. This includes the case of general GKP codes and concatenations with higher level stabilizer codes. Furthermore, we define the dual (with respect to the symplectic form) lattice $\mathcal{L}^\bot(S)$ as the set of points whose symplectic form yields an integer with every point of the lattice $S$. The code distance of the corresponding code is then defined as \(\displaystyle\min_{\substack{u,v\in\mathcal{L}(S)/S\\
 u\neq v}}\norm{u-v}_2\) \footnote{We consider the 2-norm because for Gaussian noise the probability density only depends on the 2-norm of the symplectic error representation.}.

When we now apply a linear optical transformation to the corresponding state, we have to transform our lattice by multiplying it by an orthogonal and symplectic matrix $M$.
Therefore, the new lattice is given by $MS$, where the product is defined element-wise for every element of the group $S$. Since symplectic matrices do not change symplectic forms, it can be seen from the definition of the dual lattice that $M\mathcal{L}^\bot(S)\subseteq \mathcal{L}^\bot(MS)$. However, since $M$ is invertible, we even have equality between both sets (for a proof first apply $M$ and then $M^{-1}$ and obtain a sequence of subsets where the left and right sides are the same).
We now calculate the code distance after applying $M$ and see that it is left invariant since unitaries do not change the norm:
\begin{align*}
	\min_{\substack{u',v'\in\mathcal{L}(MS)/(MS)\\
			u'\neq v'}}\norm{u'-v'}_2=\min_{\substack{u,v\in\mathcal{L}(S)/S\\
			u\neq v}}\norm{M(u-v)}_2\\=\min_{\substack{u,v\in\mathcal{L}(S)/S\\
			u\neq v}}\norm{u-v}_2\,.
\end{align*}
Thus, linear optical transformations preserve the code distance of general GKP codes and we cannot hope to find a linear optical circuit transforming independent GKP codes into a high-level concatenated GKP code. 
However, it might still be possible that some codewords of the high-level code can be generated easily by individual GKP-like states and linear optics. One application of this possible loophole is the generation of the ancilla states that we need for our error correction schemes.

Furthermore, it is also easy to see that two general quantum error-correcting codes (not necessarily GKP codes) which are equivalent up to some linear-optical transformation have the same error-correcting properties against isotropic displacement error channels (e.g. i.i.d. Gaussian displacements). Instead of transforming the codes we can transform the noise channel accordingly. However, the isotropic displacement error channel is left invariant by the linear-optical transformation, because the probability distribution of the isotropic displacement noise channel only depends on the 2-norm of the displacement vector. This norm is preserved by the transformation as it acts as an orthogonal matrix in the phase space representation. As a consequence, the error-correcting properties of these two codes are the same against isotropic displacement error channels.  

\section{Exact calculation of analog information in the three-qubit repetition code}
\label{app:analoginfo}
When using a minimum-weight decoding scheme, we are applying a correction shift of minimum weight such that we recover the codespace, i.e. the combination of the error and correction shift is an element of the dual lattice $\mathcal{L}^\bot$. Since the three-qubit repetition code is a CSS code, we can correct position and momentum shifts independently, reducing the dimensionality of the computational problems by a factor of 2. Here we will also only discuss the position shifts as the momentum stabilizers are those of independent square lattice GKP qubits. The stabilizer generators and representatives of logical operators of the code are given by
\begin{align*}
	&\exp(i\sqrt{\pi}\left(\hat{q}_1-\hat{q}_2\right)) \,,\quad \exp(i\sqrt{\pi}\left(\hat{q}_2-\hat{q}_3\right)) \,,\quad \exp(i2\sqrt{\pi}\hat{q}_3)\,,\\
	&\exp(i2\sqrt{\pi}\hat{p}_1)\,,\quad\exp(i2\sqrt{\pi}\hat{p}_2)\,,\quad\exp(i2\sqrt{\pi}\hat{p}_3)\,,\\
	&\overline{X}=\exp(i\sqrt{\pi}\left(\hat{p}_1+\hat{p}_2+\hat{p}_3\right))\,,\\ &\overline{Z}=\quad\exp(i\sqrt{\pi}\hat{q}_1)\,.
\end{align*}

We can then decompose $\mathcal{L}^\bot=\mathcal{L}^\bot_\mathds{1}\cup \mathcal{L}^\bot_\mathds{X}$, corresponding to the represented operator $\overline{X}$.
In order to obtain the set of correctable errors we have to calculate the  voronoi cells, where each cell consists of all points being closest to a given lattice point, of $\mathcal{L}^\bot$ and consider the union of all  voronoi cells including a point in $\mathcal{L}^\bot_\mathds{1}$. This can easily be done by generating a finite-size lattice and using the scipy function \textit{scipy.spatial.Voronoi} for calculating the  voronoi cells. Using the translation invariance of the actual lattice we can then obtain all  voronoi cells by applying it to cells, which are not distorted due to finite-size effects. Since we consider a 3D lattice this can be visualized nicely and one sees that the correctable set of errors is given by a union of octagons where the elementary octagon is given by the convex span of the points $(\pm\frac{3\sqrt{\pi}}{2},0,0),(0,\pm\frac{3\sqrt{\pi}}{2},0),(0,0,\pm\frac{3\sqrt{\pi}}{2})$ and the other ones can be obtained by translations of $2\sqrt{\pi}(\mathbb{Z},\mathbb{Z},\mathbb{Z})$.

In order to obtain the probability of no bit-flip error we have to integrate the probability distribution of displacement errors over the set of correctable errors.
The overall set of correctable errors is too complicated for integration and therefore we integrate over a subset of octagons and obtain lower bounds on the probability of success (when considering the union we may not count some areas twice).

Let us now consider the most common case of i.i.d. Gaussian noise with a variance of $\sigma^2$. For the elementary octagon (and all others which are only displaced along one axis) we can split the octagon into two pyramids and consider new rotated integration variables, such that the base of the pyramid is aligned with the integration axes. This way we can do these integrations analytically and we are only left with the integral $\frac{2}{\sqrt{2\pi\sigma^2}}\int_0^{\frac{3\sqrt{\pi}}{2}} \exp(-\frac{z^2}{2\sigma^2})\erf\left(\left(\frac{3\sqrt{\pi}}{2}-z\right)\frac{1}{2\sigma}\right)^2 dz $ for the probability of no bit-flip error which then can be calculated numerically. The results of this calculation are shown in Fig. \ref{fig:gkprepetitioncodeanalog} and compared with the case where we do not make use of the analog GKP syndrome information.

\begin{figure}
	\centering
	\includegraphics[width=0.7\linewidth]{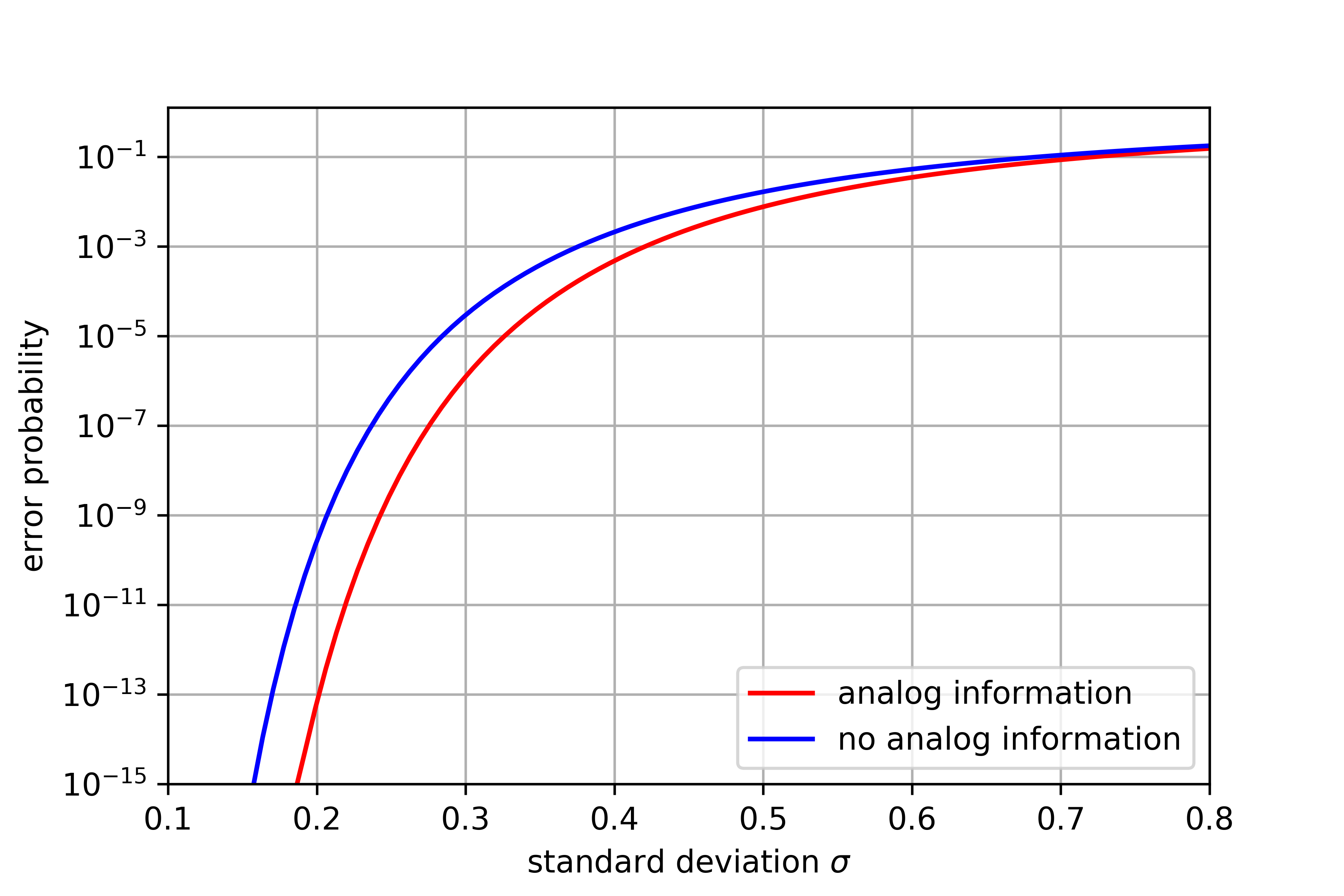}
	\caption{Logical bit-flip error rate of square GKP code concatenated with the three-qubit bit-flip code using our exact calculation. Our results are in good agreement with Ref. \cite[Fig. 2]{kosuke_minimal_measurement} where the results were obtained by a Monte-Carlo simulation. However, due to the simple numerical integration we are able to calculate small error rates where a Monte-Carlo approach would be infeasible.}
	\label{fig:gkprepetitioncodeanalog}
\end{figure}

\bibliography{ref}
\end{document}